\algrenewcommand\algorithmicrequire{\textbf{Input:}}
\algrenewcommand\algorithmicensure{\textbf{Output:}}
\algrenewcommand\algorithmiccomment[1]{\hfill {\color{gray}\% #1}}
\DeclareMathOperator*{\argmin}{argmin}
\newtheorem{lemma}{Lemma}
\newtheorem{theorem}{Theorem}
\newtheorem{definition}{Definition}
\newtheorem{claim}[lemma]{Claim}
\DeclareMathOperator{\poly}{poly}
\newcommand{\NP}{\ensuremath{\mathbf{NP}}\xspace}
\newcommand{\wtwo}{\ensuremath{\mathbf{W[2]}}\xspace}
\newcommand{\labc}{\ensuremath{\ell_{{\cClustercenter}}}\xspace}
\newcommand\lab[1]{\ensuremath{\ell_{{#1}}}\xspace}
\newcommand\lh{\ensuremath{\ell_{{\histClustercenter}}}\xspace}
\newcommand\lopt{\ensuremath{\ell_{{\mathcal{C}^*}}}\xspace}
\newcommand\lalg{\ensuremath{\ell_{C}}\xspace}
\newcommand\lalga{\ensuremath{\ell_{C'}}\xspace}
\newcommand{\costp}[1]{\ensuremath{\mathrm{cost}_{#1}}\xspace}
\newcommand{\AlgCluster}{\ensuremath{\mathcal{C}}\xspace}
\newcommand{\AlgClustera}{\ensuremath{\mathcal{C}'}\xspace}
\newcommand{\cCluster}{\ensuremath{\mathcal{C}}\xspace}
\newcommand{\cluster}[1]{\ensuremath{\mathcal{C}_#1}\xspace}
\newcommand{\histCluster}{\ensuremath{\mathcal{H}}\xspace}
\newcommand{\optCluster}{\ensuremath{\cCluster^*}\xspace}
\newcommand{\AlgClustercenter}{\ensuremath{C}\xspace}
\newcommand{\AlgClustercentera}{\ensuremath{C'}\xspace}
\newcommand{\cClustercenter}{\ensuremath{C}\xspace}
\newcommand{\clustercenter}[1]{\ensuremath{C_{#1}}\xspace}
\newcommand{\histClustercenter}{\ensuremath{H}\xspace}
\newcommand{\optClustercenter}{\ensuremath{\cClustercenter^*}\xspace}
\newcommand{\points}{\ensuremath{{X}}\xspace}
\newcommand{\newPoints}{\ensuremath{{X}'}\xspace}
\newcommand{\pointsY}{\ensuremath{{Y}}\xspace}
\newcommand{\point}{\ensuremath{x}\xspace}
\newcommand{\pointy}{\ensuremath{y}\xspace}
\newcommand{\nbPoints}{\ensuremath{n}\xspace}
\newcommand{\dist}{\ensuremath{d}\xspace}
\newcommand{\clusterDist}{\ensuremath{\Delta}\xspace}
\newcommand{\nbUpdates}{\ensuremath{b}\xspace}
\newcommand{\nbClusters}{\ensuremath{k}\xspace}
\newcommand{\optRadius}{\ensuremath{r^*}\xspace}
\newcommand{\ball}{\text{{Ball}}\xspace}
\newcommand{\cI}{\mathcal{I}}
\newcommand{\cJ}{\mathcal{J}}
\newcommand{\mainAlgo}{\textsc{Greedy\-And\-Project}\xspace}
\newcommand{\ourproblem}{\textsc{Label-consistent $k$-center}\xspace}
\newcommand{\labcons}{\textsc{Label-consistent $k$-clustering}\xspace}
\newcommand{\kcenter}{\ensuremath{k}-{\allowbreak}\textsc{C}\textsc{enter}\xspace}
\newcommand{\gonz}{\ensuremath{\text{\textsc{\large fft}}}\xspace}
\newcommand{\carv}{\ensuremath{\text{\textsc{CARVE}}}\xspace}
\newcommand{\twotwoapprox}{\textsc{Over}\-\textsc{Cover}\xspace}
\newcommand{\resilient}{\ensuremath{\text{R\textsc{\large e\-sil\-ient}}}\xspace}
\newcommand{\consinst}{\ensuremath{((X,d),k,\histCluster=(\histClustercenter,\lh),b)}}
\newcommand{\mpara}[1]{\medskip\noindent\textbf{#1}}
\newcommand{\spara}[1]{\smallskip\noindent\textbf{#1}}
\newcommand{\para}[1]{\noindent\textbf{#1}}
\begin{document}

\title{Label-consistent clustering for evolving data}

\author[1]{Ameet Gadekar%
}
\author[2,3]{Aristides Gionis%
}
\author[2,3]{Thibault Marette%
}

\affil[1]{CISPA Helmholtz Center for Information Security, Saarbr\"{u}cken, Germany}
\affil[2]{KTH Royal Institute of Technology, Stockholm, Sweden}
\affil[3]{Digital Futures, Stockholm, Sweden}
\affil[ ]{\texttt{ameet.gadekar@cispa.de},  \texttt{argioni@kth.se},~\texttt{marette@kth.se}}
\date{}

\maketitle

\begin{abstract}
Data analysis often involves an iterative process, 
where solutions must be continuously refined in response to new data. 
Typically, as new data becomes available, an existing solution 
must be updated to incorporate the latest information. 
In addition to seeking a high-quality solution for the task at hand, 
it is also crucial to ensure consistency by minimizing drastic changes from previous solutions. 
Applying this approach across many iterations, ensures that the solution evolves gradually and smoothly.

In this paper, we study the above problem in the context of clustering, 
specifically focusing on the $k$-center problem. 
More precisely, we study the following problem: 
Given a set of points \points, parameters \nbClusters and \nbUpdates, and a prior
clustering solution \histCluster for~\points, 
our goal is to compute a new solution \cCluster for \points, consisting of \nbClusters centers, 
which minimizes the clustering cost while introducing at most \nbUpdates changes from \histCluster. 

We refer to this problem as \emph{label-consistent $k$-center}, 
and we propose two constant-factor approximation algorithms for it.
We complement our theoretical findings with an experimental evaluation 
demonstrating the effectiveness of our methods on real-world~datasets.
\end{abstract}

\section{Introduction}

Data clustering is a fundamental problem in data analysis with numerous applications across various 
domains~\citep{gan2020data}.
Traditionally, clustering is studied in a {one-shot} setting, 
where the goal is to find an optimal solution for a given dataset without incorporating prior information. 
However, in many applications, clustering is often an iterative process.
Instead of starting from scratch, a clustering solution from a previous computation may be available. 
The objective then becomes to \emph{refine} that solution in order to be \emph{consistent} with newly-introduced or modified~data. 
In such a scenario, 
we are interested in finding a high-quality solution,
which is also \emph{close enough} to the previously-available solution, 
so as to preserve stability and continuity.

As an example, consider clustering a stream of evolving data, such as news articles.
To effectively monitor major news stories, we want to update our clustering daily as new articles arrive. 
However, rather than forming entirely new clusters each day, 
we seek to maintain continuity with the previous day's clustering. 
This approach is crucial because major news stories often unfold over multiple days.
By ensuring consistency in clustering, we can track evolving narratives more effectively, 
preserving the coherence of long-running~stories.

In this paper, we study the problem of refining an existing (but possibly obsolete) clustering 
in the presence of new data, 
while respecting a consistency requirement between the refined solution and the existing~one.

More specifically, we introduce a novel problem, 
which we call \emph{label-consistent $k$-clustering}, defined as follows.
Given a set of points~\points, parameters \nbClusters and~\nbUpdates, and a prior 
clustering solution \histCluster for~\points, 
our goal is to compute a new solution \cCluster for \points, consisting of \nbClusters centers
and minimizing the clustering cost while introducing at most \nbUpdates changes to \histCluster. 
The type of changes that we account for is \emph{labeling changes}, 
namely, data points that are re-assigned to a different cluster;
see~\Cref{def:labcons} for a formal definition.
We focus on consistency in the context of the classic $k$-center problem \cite{gonzalez1985clustering}, to define \emph{label-consistent $k$-center}.

We refer to the prior solution \histCluster as \emph{historical clustering}. 
Our formulation is agnostic to how the historical clustering \histCluster has been computed, 
we only assume that the data points in \points can be assigned to centers in \histCluster.
The problem we study is motivated by scenarios where the historical clustering is \emph{fixed and immutable}, that is, it has already been presented to the user (e.g., a set of clients assigned to services, which has already been published) and cannot be retracted. In such a case, even if the historical clustering is imperfect, in order to avoid disruptions we want to require that the new clustering stays consistent to the historical clustering.

In the case of clustering evolving data, our framework can be instantiated as follows.
At time $t$ a clustering solution $\cCluster_{t}$ is computed recursively on data $\points_{t}$, using solution $\cCluster_{t-1}$ as historical clustering, 
where at time~$0$ any standard clustering method can be used to compute $\cCluster_{0}$. At time $t$, we want to refine our clustering solution for data $\points_{t}$, which may have additions and/or deletions with respect to $\points_{t-1}$. For clarity, first consider the case when $X_t$  contains only new points, i.e., no deletions. In that case, we extend the solution $\cCluster_{t-1}$ by assigning the newly added points to their closest centers.
Now if $X_t$ deletes some points from $X_{t-1}$, we can simply remove them from the cluster assignment in solution $\cCluster_{t-1}$, provided they were not centers in $\cCluster_{t-1}$. Otherwise, for each deleted point $p$ that was a center in $\cCluster_{t-1}$, we select another point from $\cCluster_{t-1}$ (that is still present in $X_t$) as a new cluster center. This ensures that all remaining points in the cluster centered at $p$ in $\cCluster_{t-1}$
are still present together in a cluster with a new~center.\footnote{Note that if all the points of the cluster centered at $p$ in $\cCluster_{t-1}$ are deleted in $X_t$, then we can simply remove the entire cluster from $\cCluster_{t-1}$.}

We can now apply our \emph{label-consistent $k$-clustering} problem 
and seek a new solution $\cCluster_{t}$ that minimizes the clustering cost 
while relabeling at most $b$ data points in $\points_{t}$
with respect to their current assignment to centers in \histCluster. This captures the essence of \histCluster, while allowing us to derive insights from the new data.

Prior work has considered similar frameworks for ensuring stability between 
clustering solutions, including 
a different formulation of consistent $k$-clustering~\citep{lattanzi2017consistent}, 
notions of evolutionary clustering~\citep{chakrabarti2006evolutionary}, 
and a recent work on resilient $k$-clustering~\citep{ahmadian2024resilient}.
More discussion on the differences of those approaches from our formulation 
is provided in the related-work section, 
while an empirical evaluation with the resilient $k$-clustering approach
is presented in our experiments.

Given that the problem we study is \NP-hard, we present two constant-factor approximation algorithms for \emph{label-consistent $k$-center}. 
The first algorithm, named \twotwoapprox, is a tight $2$-approximation algorithm, albeit running in $2^k\, poly(n)$.
Our second algorithm, named \mainAlgo, is a $3$-approximation
algorithm that runs in polynomial time and serves as the main result of the paper.
In summary, in this paper we make the following contributions.
\begin{itemize} 
\setlength\itemsep{-0.2em}
\item 
We introduce the problem of \emph{label-consistent $k$-clustering}, 
a novel clustering formulation that aims to optimize clustering cost 
while ensuring a consistency constraint, in the form of maximum number of data point re-labelings, from a historical clustering.
\item We present two constant-factor approximation algorithms for the $k$-center variant of the proposed problem (\Cref{theorem:2:2} and \Cref{thm:mainthm}).

\item 
We present a thorough experimental evaluation on real-world datasets and on different settings, 
comparing our algorithms with standard $k$-center algorithms
and a state-of-the-art baseline.
\end{itemize}

\section{Related work}
\label{section:related-work}

Data clustering is a widely-studied topic
\citep{gan2020data}.
Most related to our paper are methods with provable approximation guarantees. 
In this area, research has focused on problems, such as
$k$-means~\citep{ahmadian2019better}, 
$k$-median~\citep{charikar1999constant}, and 
$k$-center~\citep{gonzalez1985clustering}.
For the $k$-center problem, which is the focus of this paper, 
the classic \carv algorithm~\citep{hochbaum1985best}
provides a 2-approximation. 
This algorithm is a recurrent subroutine for the methods we develop in this paper.

Traditional methods do not provide
consistency requirements, and thus, a small change in the data
might result in a large discrepancy on the clustering result. 
Since consistency is a desirable property in many applications, 
researchers have also considered formulations to account for this requirement.

\mpara{Evolutionary clustering.} 
Different definitions have been proposed to measure temporal smoothness for clustering evolving data.
\citet{chakrabarti2006evolutionary} introduce a general framework
where they 
impose constraints on the changes of clustering solutions between consecutive snapshots. 
\citet{chi2009evolutionary} present a spectral approach, 
where the temporal-smoothness component of their approach accounts for cluster membership.
These approaches, as well as followup research~\citep{folino2013evolutionary,xu2014adaptive}, 
present mainly heuristic methods that do not offer quality guarantees.

\mpara{Dynamic clustering.} 
There is extensive work in the clustering literature proposing different formulations and methods to minimize the number of updates (recourse) for evolving data~\citep{chan2018fully,pellizzoni2023fully,bateni2023optimal}. However, existing approaches differ fundamentally from ours. These dynamic algorithms treat label-consistency as a soft objective to be minimized, either per update or across updates for amortized cost, rather than as a hard constraint enforced in the problem definition.
Furthermore, our problem is motivated by a regime where the number of points is extremely large compared to the number of time steps in which the input changes. This stands in contrast to typical dynamic settings, which often assume many updates and therefore require sublinear or near-sublinear running times.

\mpara{Low-recourse algorithms.} More generally, low-recourse algorithms have been proposed for problems other than clustering, and share similar goals with our approach. For instance,\cite{bhattacharya2023chasing} consider online bipartite matching with constraints on the amortized number of replacements, achieving logarithmic recourse bounds. Similarly, \cite{bernstein2019online} investigate the problem of maintaining solutions to time-varying optimization problems while minimizing the cost of transitions between solutions. Similarly to our difference with dynamic clustering, this line of work typically treats recourse as a soft objective to be minimized or bounds it in an amortized sense, whereas recourse is a hard constraint in our problem definition.

\mpara{Consistent clustering (a different definition).}
\citet{lattanzi2017consistent} introduce the idea of \emph{consistent clustering}, 
but their notion of consistency is significantly different than ours.
In particular, they measure consistency in terms of the symmetric differences between the sets of cluster centers, 
while we measure changes in the assignment of data points to centers.
We elaborate on the difference with our definition in Section \ref{section:problem}.
Furthermore, \citet{lattanzi2017consistent} and followup 
research~\citep{fichtenberger2021consistent,lkacki2024fully,forster2025dynamic}
consider an online setting and provide competitive-ratio analysis.

\mpara{Resilient clustering.} 
More recently, \citet{ahmadian2024resilient} introduced the notion of \emph{resilient $k$-clustering}.
Their goal is to find a clustering that is resilient to perturbations in the input data.
Their resiliency definition asks to preserve cluster membership of the data points, 
similarly to our consistency definition. 
The main difference is that they aim to achieve resiliency \emph{without any prior knowledge to historical clustering}; 
this makes their setting more strict,
and as we will see in our experimental evaluation, 
their algorithm often gives poor results in terms of resiliency/consistency measure.
This line of research falls in the topic of 
\emph{perturbation resilience}~\citep{balcan2016clustering,chekuri2018perturbation,bandyapadhyay2022perturbation}, 
where the goal is to find an optimal clustering that does not change
when the data are perturbed by a small amount. 
Instead, we do not make any assumption about the input data.

\section{\labcons}
\label{section:problem}

In this section, we introduce our notation  
and present the formal definition of the clustering problem we~study.

\mpara{Clustering.} 
Let $(\points,\dist)$ be a metric space on $n$ points.
For any $\pointsY\subseteq\points$ and $\point\in\points$, 
we use $\dist(\point,\pointsY)$ to denote the minimum distance from \point to any point in \pointsY, 
that is, $\dist(\point,\pointsY)=\min_{\pointy\in\pointsY}\dist(\point,\pointy)$. 
A \nbClusters-clustering of a set of points $\points$ is a pair $\cCluster=(\cClustercenter,\labc)$, 
where $\cClustercenter \subseteq \points$ is a set of $\nbClusters$ \emph{cluster centers}, and 
$\labc: \points \mapsto \cClustercenter$ is \emph{labeling function}, 
which assigns each point $\point\in\points$ to a center $\labc(\point)\in\cClustercenter$.

\begin{definition}[The $k$-clustering problem]\label{def:kclus}
Fix some positive real $p \in \mathbb{R}_{\ge 1}$.
Given a metric space $(X,d)$ and a positive integer $k \in \mathbb{Z}_+$, the \nbClusters-Clustering problem
asks to find a \nbClusters-clustering $\cCluster=(\cClustercenter,\labc)$ of $X$ 
that minimizes the cost function 
$\costp{p}(\cCluster)=\left(\sum_{\point\in\points}\dist(\point,\labc(\point))^p\right)^{1/p}$.   
\end{definition}
When $p=1$, the objective yields the \emph{\nbClusters-median} problem, and
when $p=2$, it captures the \emph{\nbClusters-means} objective.
When $p\rightarrow \infty$, it recovers the \kcenter problem,
where 
$\costp{\infty}(\cCluster)=\max_{\point\in\points}\dist(\point,\labc(\point))$.

\mpara{Label-consistent clustering.} 
To present the proposed notion of label-consistent clustering, 
we start by defining the distance between two clusterings.

\begin{definition}[Distance between two clusterings] 
\label{definition:clustering-distance}
Let $\cluster{1} = (\clustercenter{1}, \lab{1})$ and $\cluster{2} = (\clustercenter{2}, \lab{2})$ 
be two $\nbClusters$-clusterings over the same set of points $\points$. 
We define the distance between 
$\cluster{1}$ and $\cluster{2}$ as 
$
\clusterDist(\cluster1,\cluster2) = |\{\point \in \points, \lab1(\point) \neq \lab2(\point)\}|,
$
namely, the number of points in \points that are assigned to different 
cluster centers in \cluster{1} and \cluster{2}. 
\end{definition}

In this work, we measure the distance between two clusterings $\cluster{1}$ and $\cluster{2}$
using the distance function $\clusterDist(\cluster1,\cluster2)$ of Definition~\ref{definition:clustering-distance}.
An alternative definition is 
the size of the symmetric difference $|\clustercenter{1} \triangle\,  \clustercenter{2}|$, 
which was used to define consistent clustering in online 
settings~\citep{lattanzi2017consistent}.
One can observe that 
the two clustering distance functions
$\clusterDist(\cluster1,\cluster2)$ and $|\clustercenter{1} \triangle\,  \clustercenter{2}|$
can take very different values.
Assume now  that a clustering solution \histCluster, 
which might have been computed over old data, is provided for the current data points \points. 
We want to \emph{refine} the historical clustering \histCluster
while enforcing a \emph{consistency constraint}.

To ensure consistency with an available historical clustering $\histCluster$, we use a parameter \nbUpdates to control the degree of label consistency in the resulting clustering,
and we require to find a \nbClusters-clustering \cCluster that has minimal clustering cost
while satisfying the consistency constraint $\clusterDist(\histCluster,\cCluster)\leq \nbUpdates$.
More formally, we define the following problem.
\begin{definition}[\labcons]\label{def:labcons} 
Fix some  $p \in \mathbb{R}_{\ge 1}$.
Given a metric space $(X,d)$, two positive integers $k,b \in \mathbb{Z}_+$, and a historical clustering $\histCluster$ of $\points$ with $k$ centers, the \labcons problem seeks to find a \nbClusters-clustering $\cCluster=(\cClustercenter,\labc)$ of \points that minimizes the cost function 
$\costp{p}(\cCluster)=\left(\sum_{\point\in\points}\dist(\point,\labc(\point))^p\right)^{1/p}$,   
while ensuring $\clusterDist(\histCluster,\cCluster)\leq \nbUpdates$.
\end{definition}
In this paper we focus on the $\costp{\infty}(\cdot)$ objective, 
namely the \emph{\nbClusters-center} objective.
\begin{definition}[\ourproblem]
The \ourproblem problem is the instantiation of the \labcons problem with $p=\infty$, 
and thus, $\costp{\infty}(\cCluster)=\max_{\point \in \points} \dist(\point,\labc(\point))$.
\end{definition}

We denote by $\cI=\consinst$  an instance of \ourproblem. We say a (clustering) solution $\cCluster=(\cClustercenter,\labc)$  to $\cI$ is \emph{feasible} if $|C|=k$ and $\clusterDist(\histCluster,\cCluster)\leq \nbUpdates$, i.e., $\cCluster$ opens $k$ centers and reassigns at most $b$ points from the historical clustering $\histCluster$. 
We denote by $\optCluster=(\optCluster,\lopt)$ as an optimal solution to $\cI$, and denote by $r^*:=\costp{\infty}(\optCluster)$, the optimal \emph{radius} of $\optCluster$.

An interesting observation is that an optimal solution \optCluster for \ourproblem
may not necessarily assign each point in~\points to its closest center in \optCluster, 
as the consistency constraint may force a data point to be assigned to its historical cluster center
although it is not the closest center.

It is easy to see (by setting $\nbUpdates=\nbPoints$) that, for specific values of $p$, \labcons inherits the hardness of the corresponding $k$-clustering problem. 
In fact, for \ourproblem, we obtain the following stronger hardness results,\footnote{Similar hardness of approximation results follow for \textsc{Label-Consistent} $k$-median and $k$-means.} whose proof is deferred to~\Cref{ss:hardproof}.

\begin{theorem}\label{thm:hardness} 
For every $\epsilon>0$ and every computable function $f$, the following hold:
\begin{enumerate}
    \setlength\itemsep{-0.2em}
    \item[($i$)] It is \NP-hard to approximate \ourproblem\ within a factor of $(2-\epsilon)$.
    \item[($ii$)] It is \wtwo-hard to approximate \ourproblem\ within a factor of $(2-\epsilon)$ 
    when parameterized by $k$~\citep{downey2012parameterized}.
    Thus, there is no $(2-\epsilon)$-approximation algorithm running in time $f(k)\,n^{O(1)}$. 
    \item[($iii$)] The above hardness results remain valid even if one is allowed to reassign $g(b)\ge b$ points, for any function $g$.
\end{enumerate}
\end{theorem}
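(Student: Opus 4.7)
The plan is to reduce from the \textsc{Dominating Set} problem, which is known to be \NP-hard and \wtwo-hard when parameterized by the solution size $k$. Given an instance $(G=(V,E),k)$ of \textsc{Dominating Set} with $|V|=n$, I apply the classical metric construction: define $d(u,v)=1$ if $u=v$ or $uv\in E$, and $d(u,v)=2$ otherwise. A routine check shows that $d$ is a metric on $V$, and the \kcenter cost of $(V,d)$ equals $1$ precisely when $G$ admits a dominating set of size $k$, and equals $2$ otherwise.

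To lift this to a \ourproblem instance, I take the point set $\points=V$ equipped with the metric $d$, keep the same $k$, fix an arbitrary historical $k$-clustering $\histCluster=(\histClustercenter,\lh)$ of $V$ (for concreteness let $\histClustercenter$ be any $k$ vertices and $\lh$ assign each point to its nearest center in $\histClustercenter$), and choose the reassignment budget $\nbUpdates=n$. Since $\nbUpdates=|\points|$, the consistency constraint $\clusterDist(\histCluster,\cCluster)\le\nbUpdates$ is trivially satisfied by every $k$-clustering $\cCluster$ of \points. Hence the optimum $\optRadius$ of the resulting \ourproblem instance coincides with the underlying \kcenter optimum, which is $1$ on yes-instances and $2$ on no-instances. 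A $(2-\epsilon)$-approximation would therefore decide \textsc{Dominating Set}; item~(i) then follows from its \NP-hardness and~(ii) from its \wtwo-hardness when parameterized by~$k$.

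For~(iii), observe that our reduction already uses $\nbUpdates=n$, so for every function $g$ the bicriteria budget $g(\nbUpdates)\ge \nbUpdates$ still permits at most $n$ reassignments, which is all the points available. An algorithm with this enlarged budget therefore faces the same problem as one constrained to budget~$\nbUpdates$, and the $(2-\epsilon)$ gap between yes- and no-instances is preserved.

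The only step requiring genuine care is the $\{1,2\}$-gap property of $(V,d)$ for the \kcenter objective, which is standard. The key conceptual observation, and the one that makes~(iii) essentially free, is that picking $\nbUpdates$ as large as $|\points|$ renders the label-consistency constraint vacuous, so no additional gadget or padding is needed to accommodate arbitrary computable functions~$g$.
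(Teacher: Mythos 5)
Your proof is correct and follows essentially the same route as the paper: both make the consistency constraint vacuous by setting $b=|X|$, so the instance collapses to plain \kcenter and its known $(2-\epsilon)$-inapproximability (\NP-hardness and \wtwo-hardness in $k$) transfers directly, with (iii) coming for free because $g(b)\ge b=n$ cannot exceed the total number of points. The only difference is that the paper black-boxes the \kcenter hardness while you inline the standard \textsc{Dominating Set} gadget; note the small slip that $d(u,u)$ should be $0$, not $1$, for $d$ to be a metric.
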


\section{Algorithms}
\label{section:algorithms}
In this section, we present two constant-factor approximation algorithms for the \ourproblem problem. 
In~\Cref{ss:fptalg}, we present \twotwoapprox, a tight $2$-approximation algorithm that runs 
in FPT time in parameter $k$~\citep{downey2012parameterized}.
\mainAlgo, the main contribution of this paper, is presented in~\Cref{ss:mainalg}, 
and is a $3$-approxi\-mation algorithm that runs in polynomial time.
We remark that both of our algorithms work under the assumption that the optimal radius $\optRadius$ is known. In~\Cref{ss:assump} we discuss how this assumption can be removed and its relation to the budget parameter~$b$.

\mpara{Notation.}
Let $(X,d)$ be a metric space. For a point $x\in X$ and $r \in \mathbb{R}_{\ge 0}$, we denote by $\ball(x,r)$ as the set of points that are at a distance at most $r$ from $x$. $\poly(n)$ denotes a fixed function that is polynomial in $n$. 

\subsection{A tight $2$-approximation in FPT time}
\label{ss:fptalg}

In this section, we design an FPT $2$-approximation algorithm for \ourproblem. 

\begin{theorem}
\label{theorem:2:2}
There is a $2$-approximation algorithm for the \ourproblem problem running in time~$2^k \, \poly(n)$.
\end{theorem}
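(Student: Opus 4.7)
The plan is to reduce \ourproblem to a sequence of classical \kcenter sub-instances by enumerating two structural guesses about a fixed optimum $\optCluster = (\optClustercenter, \lopt)$: (i) the optimal radius $\optRadius$, which is one of the $O(n^{2})$ pairwise distances in $(X,d)$, and (ii) the set $S^{*} := \optClustercenter \cap \histClustercenter$ of historical centers kept open by the optimum. Since $|\histClustercenter| = k$, there are only $2^{k}$ candidates for $S^{*}$, so I can afford to iterate over all $O(n^{2}) \cdot 2^{k}$ pairs $(r, S)$ within the FPT budget.

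For each candidate $(r, S)$, I would decouple the consistency constraint from the covering constraint. Define the \emph{forced-relabel set}
\[
A(r,S) \;:=\; \{\, x \in \points \,:\, \lh(x) \notin S \text{ or } \dist(x, \lh(x)) > r \,\},
\]
i.e.\ the points that cannot retain their historical label while attaining radius $r$, and the \emph{uncovered set} $B(r,S) := \{\, x \in \points \,:\, \dist(x, S) > r \,\}$. If $|A(r,S)| > \nbUpdates$, discard the guess. Otherwise run \carv on the sub-instance with point set $B(r,S)$ and budget $k - |S|$, obtaining a set $T$ of at most $k - |S|$ centers covering $B(r,S)$ within radius $2r$ whenever that sub-instance is feasible at radius $r$. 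Output $\AlgClustercenter := S \cup T$ together with the labeling $\labc$ that preserves $\lh(x)$ for every $x \notin A(r,S)$ and assigns each remaining point to its nearest center in $\AlgClustercenter$.

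For correctness I would focus on the iteration with $r = \optRadius$ and $S = S^{*}$. A short case analysis gives $A(\optRadius, S^{*}) \subseteq \{x : \lopt(x) \neq \lh(x)\}$: if $\lh(x) \notin S^{*}$ then $\lh(x) \notin \optClustercenter$, so $\optCluster$ must relabel $x$; if $\lh(x) \in S^{*}$ but $\dist(x, \lh(x)) > \optRadius$ then the radius bound $\dist(x, \lopt(x)) \leq \optRadius$ again forces $\lopt(x) \neq \lh(x)$. Hence $|A(\optRadius, S^{*})| \leq \nbUpdates$. Moreover, $\optClustercenter \setminus S^{*}$ certifies that $B(\optRadius, S^{*})$ can be covered by $k - |S^{*}|$ balls of radius $\optRadius$, so \carv succeeds and returns $T$ with covering radius at most $2\optRadius$. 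Points outside $B(\optRadius, S^{*})$ already lie within $\optRadius$ of $S^{*}$, so the overall cost is at most $2\optRadius$ while the labeling performs at most $|A(\optRadius, S^{*})| \leq \nbUpdates$ relabelings. The runtime is dominated by the outer loops, giving $O(n^{2}) \cdot 2^{k} \cdot \poly(n) = 2^{k}\poly(n)$.

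The main obstacle I anticipate is ensuring that the relabel-budget and covering-budget conditions are simultaneously satisfied: for an arbitrary guess $(r, S)$ either may easily fail, and in particular the "natural" $k$-center sub-instance induced by $B(r,S)$ need not have a solution with $k - |S|$ balls of radius $r$. My argument sidesteps this by exhibiting a single guess $(\optRadius, S^{*})$ at which both conditions become tight, and by brute-forcing all $O(n^{2}) \cdot 2^{k}$ guesses I am guaranteed to encounter it, at which point the \carv sub-routine delivers the desired $2$-approximation.
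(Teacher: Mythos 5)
Your proposal is correct and follows essentially the same route as the paper's proof: guess the optimal radius and the set $H^*=\optClustercenter\cap\histClustercenter$, note that the forced-relabel set (your $A(r,S)$, the paper's $X'_B$) must already be relabeled by the optimum and hence has size at most $b$, cover the uncovered points with \carv at radius $2r^*$, and bound the number of new centers by $k-|H^*|$ using $\optClustercenter\setminus H^*$ as a certificate. The only cosmetic difference is that you filter guesses by checking $|A(r,S)|\le b$ up front, whereas the paper verifies feasibility of the returned solution at the end.
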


\para{Overview of the \twotwoapprox algorithm.}
The high level idea of our algorithm, is simple and clean: first, we assume that we know the optimal radius $r^*$ and the historical centers $H^*$ that are present in an optimal solution; next, we identify the points that are far from $H^*$ using $r^*$ and cluster them using any classical $2$-approximation \kcenter algorithm; finally, we reassign points to minimize the number of reassignments. A crucial part of the analysis is to show that \twotwoapprox opens at most $k$ centers.

In more detail, let $\optCluster=(\optClustercenter,\lopt)$ 
be an optimal solution to $\cI$ with cost $r^*$, and 
let $H^* = \optClustercenter \cap \histClustercenter$, 
be the set of historical centers present in $\optClustercenter$.
First, the algorithm guesses $H^*$.
Next, it considers all the points in $X_B:=X \setminus \cup_{h \in H^*}\ball(h,r^*)$, and 
clusters them using \carv algorithm of~\cite{hochbaum1985best}. 
The \carv algorithm,
when given a dataset $X'$ and radius $r$, 
repeatedly picks a center $s'\in X'$ and deletes all the points within distance $2r$ from $s'$, 
until all the points are deleted. 
If $S'$ is the set of picked centers by \carv, then it is easy to see that $d(x',S') \le 2r$ 
for every $x' \in X'$. 
Finally,
the sets $S'$ and $H^*$ are merged
to obtain a center set $C$, and 
points are assigned to $C$ minimizing the number of reassignments. 
Pseudocode for \twotwoapprox and \carv are available in Appendix~\ref{appendix:peusodcode}, and the analysis is deferred to \Cref{ss:fptproof}.

\subsection{A $3$-approximation algorithm}
\label{ss:mainalg}

In this section, we present a polynomial time $3$-approximation algorithm for \ourproblem.

\begin{theorem}\label{thm:mainthm}
    There is a $3$-approximation algorithm for the \ourproblem problem 
    running in time $O(n^2 \log n + nk \log n)$.
\end{theorem}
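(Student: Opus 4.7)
The plan is to design a decision subroutine which, given a candidate radius $r$, either produces a feasible solution of cost at most $3r$ or certifies that $r < \optRadius$, and then binary-search for the smallest $r$ over the $O(\nbPoints^2)$ pairwise distances in $\points$. The binary search accounts for the $\log \nbPoints$ factor in the running time, while the two dominant terms reflect one pass over all pairs and a Gonzalez/\carv-style pass that touches each point against $O(\nbClusters)$ centers.

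For the core subroutine, I would follow the three-stage structure suggested by the auxiliary routines \greenRedColoring, \closeDownHistGreedy, and \twoColor. First, \greenRedColoring identifies a subset of historical centers that are ``safe to keep,'' namely those that can act as a surrogate for some optimal center: a historical center $h \in \histClustercenter$ enters the candidate set if a witness point lies within distance $r$ of $h$, and $h$ is not too close to another already-selected historical center, so that the triangle inequality can later be invoked to bound the final radius. Next, \closeDownHistGreedy prunes this candidate set, enforcing that each optimal cluster corresponds to at most one retained historical center; this step is crucial for bounding the total number of opened centers by $\nbClusters$. Finally, \twoColor covers the points still uncovered by the retained historical centers using a \carv-style greedy routine with radius $r$, and then projects each newly opened center onto a nearby historical representative whenever doing so saves reassignments.

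The cost analysis is expected to follow a two-step triangle inequality. For a point $\point$ with optimal center $c^* \in \optClustercenter$ at distance at most $\optRadius$, the center $\acenter \in \cClustercenter$ assigned by the algorithm should lie within $2\optRadius$ of $c^*$, since the greedy covering used by \carv places a chosen center within $2r$ of any previously uncovered point once $r \ge \optRadius$; then $\dist(\point,\acenter) \le 3\optRadius$ follows. The main obstacle, and the technical heart of the proof, is joint feasibility of the cardinality and budget constraints: I must show that the subroutine opens at most $\nbClusters$ centers and performs at most $\nbUpdates$ reassignments whenever $r \ge \optRadius$. I plan to establish this through a charging argument that injectively maps each center opened by the algorithm to a distinct center in $\optClustercenter$, and each algorithmic reassignment to a reassignment already counted by $\clusterDist(\histCluster, \optCluster) \le \nbUpdates$. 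The pruning rule in \closeDownHistGreedy must be tuned so that no optimal center has more than one algorithm-opened center charged to it; I anticipate this will require a careful case analysis separating retained historical centers that are close to an optimal historical center from those close to a newly-opened optimal center, and it is here that the choice of $3r^*$ (rather than a smaller multiple) appears indispensable.
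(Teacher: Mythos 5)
Your outer framework---a decision subroutine for a candidate radius $r$, binary-searched over the $O(n^2)$ pairwise distances---is exactly the paper's reduction of \Cref{thm:mainthm} to \Cref{thm:3approx}, and your $3\optRadius$ cost bound via a two-step triangle inequality is also the paper's. The gaps are all in the subroutine itself and in the feasibility argument, which you correctly identify as the technical heart but do not actually supply.

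First, the quantitative tool that makes the budget argument work is missing. The paper defines a weight $w(h)=|\ball(h,\optRadius)\cap \hat{\pi}_h|$, the number of points historically assigned to $h$ that could keep their label if $h$ is retained; since the optimum reassigns at most $\nbUpdates$ points, the historical centers it retains have total weight at least $n-\nbUpdates$, and the algorithm only needs to retain historical centers of at least that total weight. This reduces feasibility to a greedy weight-domination argument rather than an injective charging of individual reassignments. Your selection rule (``has a witness point within distance $r$ and is not too close to an already-selected historical center'') carries no weight information: with two historical centers within $2\optRadius$ of each other, one carrying a single consistent point and the other a thousand, an arbitrary choice loses the budget bound, and your local rule of projecting onto a historical representative ``whenever doing so saves reassignments'' does not compare against what the optimum saves. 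Second, your pruning step ``each optimal cluster corresponds to at most one retained historical center'' is not implementable (the optimal clusters are unknown) and is also the wrong invariant: the optimum may retain several historical centers lying close together, and the algorithm must be allowed to as well. The paper sidesteps this by running \carv at radius $2\optRadius$ \emph{first}, so the chosen points are pairwise more than $2\optRadius$ apart, their $\optRadius$-neighborhoods of historical centers are disjoint, and each covers a distinct optimal cluster---the disjointness you need comes for free rather than being enforced by pruning. Third, your subroutine opens centers only for coverage; the paper's end phase spends the remaining $\nbClusters-|S|$ slots on additional maximum-weight historical centers solely to absorb budget, and without this step the reassignment constraint can be violated even when every point is within $3\optRadius$ of a center. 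As written, the proposal establishes the approximation ratio and the center count but not $\clusterDist(\histCluster,\cCluster)\le\nbUpdates$.
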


The presentation proceeds as follows: 
we outline our algorithm \mainAlgo in~\Cref{ss:overview} and,
prove its correctness in~\Cref{ss:analysis}. The proof of~\Cref{thm:mainthm} is present in~\Cref{ss:mainthmproof}.

\begin{minipage}[t]{0.50\textwidth}
\begin{algorithm}[H]
\caption{\textsc{GreedyAndProject}}\label{alg:3approxnew}
\begin{algorithmic}[1]
\Require an instance $\cI=\consinst$ of \ourproblem,  optimal radius \optRadius
\Ensure a $3$-approximate solution $\AlgCluster=(\AlgClustercenter, \lalg)$ to $\cI$

\State $C_0,C_1,C \gets \emptyset$
\State $S \gets \carv(\points,2\optRadius)$\label{alg:3approxnew:fft} \Comment{\Cref{alg:fft}}
\For{$s \in S$\label{3approxnew:for}}
    
    \If{$N_\histClustercenter(s) \neq \emptyset$} 
    \State add the maximum weight center, $\tilde{s}$, from $N_\histClustercenter(s)$ to $C_0$ by breaking ties arbitrarily
    \Else\ add $s$ to $C_0$
    \EndIf
\EndFor
\State Let $C_1$ be the $(k-|C_0|)$ maximum weight historical centers from $\histClustercenter\setminus C_0$ \label{algo:3approx:c1}
\State Let $C \gets C_0 \cup C_1$ 
\For{$\point \in \points$\label{algo:3approx:assgn}} \Comment{assign points to $C$}
    \If{$\lh(\point) \in C$  and $d(x,\lh(x))\le \optRadius$}
         $\lalg(\point) = \lh(\point)  $
    \Else     
        \ $\lalg(\point) \gets \arg\min_{c\in C}d(x,c)$
    \EndIf    
\EndFor
\State \textbf{return} $\AlgCluster=(\AlgClustercenter, \lalg)$\label{algo:3approx:ret}
\end{algorithmic}
\end{algorithm}
\end{minipage}
\hfill
\begin{minipage}[t]{0.46\textwidth}
\begin{algorithm}[H]
\caption{\text{Analyzing~\Cref{alg:3approxnew}}}\label{alg:helperalgo}
\begin{algorithmic}[1]
\Require $\cI=\consinst$, optimal solution $\optCluster=(\optClustercenter,\lopt)$, set $S$ obtained from~\Cref{alg:3approxnew:fft} of~\Cref{alg:3approxnew}
\Ensure a feasible solution $\AlgClustera=(\AlgClustercentera, \lalga)$ to $\cI$
\State $C'_0,C'_1,C'_2, C' \gets \emptyset$
\For{$s \in S$\label{alg:helperalgo:for}\label{alg:helper:C0for}}
        \If{$N_\histClustercenter(s) \neq \emptyset$} 
        \State add the maximum weight center, $\hat{s}$, from $N_\histClustercenter(s)$ to $C'_0$ by breaking ties arbitrarily
    \Else\ add $s$ to $C'_0$    
    \EndIf
\EndFor
\For{$s\in S_\gamma$}  \Comment{$S_\gamma= \{s\in S \vert \Gamma^*_s \neq \emptyset\}$\label{alg:helper:forc1}}
    \State pick $|\Gamma^*_s|-1$ maximum-weight historical centers, $\hat{\Gamma}_s$, from $\Gamma^*_s \setminus \{\hat{s}\}$ \label{alg:helperalgo:pickadd}
    \State $C'_1 \gets C'_1 \cup \hat{\Gamma}_s$
\EndFor

\State Let $C'_2$ be the $|H^*_f|$  maximum weight historical centers from $\histClustercenter\setminus (C'_0 \cup C'_1)$ \label{alg:helperalgo:pickhid}
\State Let $C' \gets C'_0 \cup C'_1 \cup C'_2$ 
\For{$\point \in \points$\label{alg:helper:assgn}} 
    \If{$\lh(\point) \in C$  and $d(x,\lh(x))\le \optRadius$}
         $\lalg(\point) = \lh(\point)  $
    \Else     
        \ $\lalg(\point) \gets \arg\min_{c\in C}d(x,c)$
    \EndIf    
\EndFor
\State \textbf{return} $\AlgClustera=(\AlgClustercentera, \lalga)$
\end{algorithmic}
\end{algorithm}
\end{minipage}

\subsubsection{Overview of the \mainAlgo algorithm}
\label{ss:overview}

The pseudo-code of our \mainAlgo algorithm is described in~\Cref{alg:3approxnew}. 
We assume that the algorithm has access to the optimal radius $r^*$ of the input instance $\cI=\consinst$. Let the historical clusters be denoted as $\hat{\Pi}=\{\hat{\pi}_h\}_{h \in \histClustercenter}$, where $\hat{\pi}_h=\{x\in X, \lh(x) = h\}$. 
For a historical center $h \in \histClustercenter$, the \emph{weight} of $h$, denoted by $w(h)$, is the total number of points in $\ball(h,r^*) \cap \hat{\pi}_h$.
For $x \in X$, let $N_\histClustercenter(x)$ denote the set of historical centers within distance $r^*$ from $x$, i.e., $N_\histClustercenter(x) = \{h \in \histClustercenter \vert d(x,h) \le r^*\}$.
For a subset $T \subseteq \histClustercenter$ of historical centers,
let $w(T) = \sum_{h \in T} w(h)$. 
Let $\optCluster=(\optClustercenter,\lopt)$ be a fixed (but unknown) optimal solution to $\cI$.
The algorithm works in three~phases: 

\spara{($i$)~Greedy phase:} In this phase (\Cref{alg:3approxnew:fft}), the  algorithm computes a set $S$ of centers such that, for every point $x\in X$ it is $d(x,S) \le 2r^*$ (see \carv~\Cref{alg:fft}),  $|S| \le k$, and $d(s,s') > 2r^*$, for $s \neq s' \in S$.


\spara{($ii$)~Project phase: } In this phase (\textbf{for}-loop in~\Cref{3approxnew:for}), 
for every $s \in S$, the algorithm swaps $s$ with a maximum weight historical center in  $N_\histClustercenter(s)$, if it is non-empty.  Note that, if $s$ is served by a historical center in  $\optCluster$, then there exists a historical center in $N_\histClustercenter(s)$.

\spara{($iii$)~End phase: } In this phase (Lines~\ref{algo:3approx:c1}-\ref{algo:3approx:ret}), the algorithm first adds $(k-|S|)$ historical centers of maximum weight that have not yet been picked to its center set $C$ so that $|C|=k$. Then, it assigns every point $x \in X$ to $C$ minimizing the number of reassignments.

\medskip
It is easy to see that, for every $x \in X$, it holds that $d(x,C) \le 3r^*$. However, the crucial part is to show that the number of reassignments by the algorithm is at most $b$, i.e., $\clusterDist(\histCluster,\AlgCluster)\leq \nbUpdates$. The key observation for proving this claim is that in the Project phase,
if $s \in S$ is served by a historical center $h\in \optClustercenter$, then $h$ is a candidate for swapping $s$. Furthermore, since the algorithm picks the maximum-weight historical center~$\tilde{s}$, it holds that $w(\tilde{s}) \ge w(h)$. This inequality holds for every $s \in S$ as 
the set of centers that are within distance $r^*$ from $s \neq s' \in S$ are disjoint since $d(s,s') > 2r^*$. Finally, the algorithm adds $(k-|S|)$ historical centers of maximum weight from the remaining historical centers. Therefore, at every step, with respect to the number of reassignments,  the algorithm  does at least as good as the optimal solution.

\subsubsection{Analysis}\label{ss:analysis}

In this section, we prove the guarantees of the \mainAlgo algorithm (\Cref{alg:3approxnew}).
\begin{theorem}\label{thm:3approx}
Given an instance $\cI=\consinst$ of \ourproblem with optimal cost $r^*$, \Cref{alg:3approxnew} returns a feasible $3$-approximate solution   $\AlgCluster=(\AlgClustercenter, \lalg)$  to $\cI$ in time $O(nk)$.
\end{theorem}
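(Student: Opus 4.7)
The plan is to establish the three requirements of~\Cref{thm:3approx} in turn: (i) $|C|=k$, (ii) $\costp{\infty}(\AlgCluster)\le 3r^*$, and (iii) $\clusterDist(\histCluster,\AlgCluster)\le b$. The runtime bound will follow routinely since each loop ranges over $S$ (of size at most $k$), over $\points$, or over $\histClustercenter$, with only cheap weight queries per iteration.

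Claims (i) and (ii) should follow from straightforward triangle-inequality arguments. For (i), $|C_0|\le|S|\le k$: the points of $S$ are pairwise more than $2r^*$ apart by design of \carv, and any two points in a common cluster of a $k$-center solution of cost $r^*$ must lie within $2r^*$ of each other, so by pigeonhole $|S|\le k$; the algorithm then pads $C_1$ so that $|C|=k$. For (ii), \carv guarantees $d(x,S)\le 2r^*$ for every $x\in\points$, and the Project phase replaces each $s\in S$ either by itself or by a historical center in $N_\histClustercenter(s)$, at distance at most $r^*$ from $s$. Hence $d(x,C_0)\le 3r^*$, and the final assignment rule then yields $\costp{\infty}(\AlgCluster)\le 3r^*$.

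The main obstacle is (iii). The key identity is that a point $x$ is \emph{not} reassigned by $\lalg$ whenever $\lh(x)\in C$ and $d(x,\lh(x))\le r^*$, so the number of non-reassigned points is at least $w(C\cap\histClustercenter)$. For the optimal $\optCluster$, any point it leaves with its historical label must satisfy $d(x,\lh(x))\le r^*$ (since the optimal cost is $r^*$), so the number it leaves unreassigned is at most $w(\optClustercenter\cap\histClustercenter)$; feasibility of $\optCluster$ then gives $w(\optClustercenter\cap\histClustercenter)\ge n-b$. It thus suffices to prove the weight inequality $w(C\cap\histClustercenter)\ge w(\optClustercenter\cap\histClustercenter)$.

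This last inequality is where~\Cref{alg:helperalgo} enters as an analysis device. The plan is to construct the auxiliary set $C'=C'_0\cup C'_1\cup C'_2$ of size at most $k$ shadowing $\optClustercenter\cap\histClustercenter$ in three groups: one optimal historical center close to each $s\in S$ (via $C'_0$), the additional optimal historical centers lying in the same balls $N_\histClustercenter(s)$ (via $C'_1$, using $\Gamma^*_s$), and the remaining far optimal historical centers (via $C'_2$). Because the balls $N_\histClustercenter(s)$ for $s\neq s'\in S$ are disjoint (since $d(s,s')>2r^*$), the weight contributions are additive across groups, yielding $w(C')\ge w(\optClustercenter\cap\histClustercenter)$. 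Then, since \mainAlgo picks maximum-weight candidates at every step---both inside each $N_\histClustercenter(s)$ (producing $C_0$) and globally over leftover historical centers (producing $C_1$)---a step-by-step exchange argument between $C$ and $C'$ yields $w(C)\ge w(C')$, closing the chain. The subtlety I anticipate is verifying the cardinality $|C'_0|+|C'_1|+|C'_2|\le k$, which should reduce to $|\optClustercenter\cap\histClustercenter|\le k$ combined with the disjointness noted above.
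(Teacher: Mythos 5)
Your proposal follows essentially the same route as the paper: the cost and cardinality bounds come from the triangle inequality and the pairwise $>2r^*$ separation of $S$, and the budget bound comes from the weight chain $w(C\cap H)\ge w(C')\ge w(H^*)\ge n-b$ using exactly the auxiliary construction of Algorithm~\ref{alg:helperalgo}, with greedy max-weight selection dominating the idealized choice. The one subtlety you flag, $|C'|\le k$, does need slightly more than $|H^*|\le k$ plus disjointness---the paper counts the optimal clusters hit by $S$ (one distinct cluster per element of $S$) against those centered at elements of $\Gamma^*_s$ and $H^*_f$, which are necessarily unhit except for at most one per $s\in S_\gamma$---but this uses only facts you already establish in part~(i).
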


\begin{proof}
First note that $|C| =k$, since we have that $|C_0| = |S| \le k$. Furthermore, for every $x \in X$, we have that $d(x,S) \le 2r^*$, and hence $d(x,C) \le d(x,S) + r^* = 3r^*$. Note that this concludes the proof, as either $\lh(x)\in C$ and $d(x,\lh(x))\leq \optRadius$, then $d(x,\labc(x))=d(x,\lh(x))\leq \optRadius$. Otherwise, $d(x,\labc(x))=d(x,C)\leq 3\optRadius$ as desired. Therefore, we only need to show that $\clusterDist(\AlgCluster,\histCluster)\le b$.

\mpara{Correctness:}
We start with some basic definitions. Recall that, for a historical center $h \in \histClustercenter$, the weight of $h$ is $w(h)= |\ball(h,r^*) \cap \hat{\pi}_h|$.
    Fix an optimal solution   $\optCluster=(\optClustercenter, \lopt)$ to $\cI$.
    We say $H^* := \optClustercenter \cap \histClustercenter$, the set of historical centers present in the optimal centers $\optClustercenter$, as the \emph{historical optimal centers}.
    Consider the set $S$ obtained from $ \carv(\points,2\optRadius)$, and $N_\histClustercenter(s) = \histClustercenter \cap \ball(s,\optRadius)$, for $s \in S$. 
    Then, note that, for $s\neq s' \in S$, we have that $N_\histClustercenter(s) \cap N_\histClustercenter(s') = \emptyset$, since $d(s,s')>2r^*$.
    Furthermore, we say that a historical optimal center $h \in H^*$ is \emph{covered} by $s \in S$, if $h \in N_\histClustercenter(s)$. Note that a historical optimal center can be covered by at most one point in $S$, but a point in $S$ can cover multiple historical optimal centers. Hence, for $s \in S$, let $\Gamma^*_s = H^* \cap N_\histClustercenter(s)$ be the set of historical optimal centers covered by $s$. 
    Next, let $H^*_c = \bigcup_{s\in S} \Gamma^*_s$ be the historical optimal centers covered by $S$, and let $H^*_f= H^* \setminus H^*_c$, which we call \emph{far} historical optimal centers to $S$.
    Finally, let $S_\gamma \subseteq S$, be the set of points of $S$ that covers at least one historical optimal center, i.e. for which $\Gamma^*_s \neq \emptyset$. Then, $\sum_{s \in S} |\Gamma^*_s| = \sum_{s \in S_\gamma} |\Gamma^*_s| = |H^*_c|$.

    
    To show that~\Cref{alg:3approxnew} returns a feasible solution, 
    we consider an \emph{idealized}~\Cref{alg:helperalgo} 
    that is powerful and knows the optimal solution $\optCluster$. 
    \Cref{alg:helperalgo} receives as input: 
    ($i$) the instance $\cI$, 
    ($ii$) the optimal solution~$\optCluster$, and 
    ($iii$) the set $S$ obtained from~\Cref{alg:3approxnew:fft} of~\Cref{alg:3approxnew}. 
    It computes three sets: $C'_0, C'_1$ and $C'_2$ and computes a set of centers 
    $C = \cup_{i \in [3]} C'_i$. 
    First, it sets $C'_0 =S$, and replaces $s \in S$ 
    with the maximum weight historical center $\hat{s} \in N_\histClustercenter(s)$  
    if $N_\histClustercenter(s) \neq \emptyset$ (\textbf{for}-loop in~\Cref{alg:helper:C0for}).
    Then, for every $s \in S_\gamma$, it adds $|\Gamma^*_s|-1$ maximum-weight historical centers from $\Gamma^*_s \setminus \{\hat{s}\}$ to $C'_1$ (\textbf{for} loop in~\Cref{alg:helper:forc1}). Finally, in~\Cref{alg:helperalgo:pickhid}, the algorithm picks $|H^*_f|$  maximum-weight historical centers from the remaining historical centers for $C'_2$. Note that given~$\optCluster$, the algorithm can compute $H^*, H^*_c,S_\gamma$, and $\Gamma^*_s$ for $s\in S_\gamma$.
    
    We first show that the solution of~\Cref{alg:helperalgo} is a feasible solution to $\cI$.

    \begin{lemma}\label{lem:helperlemma}
        The solution $\AlgClustera=(\AlgClustercentera, \lalga)$ returned by~\Cref{alg:helperalgo} is a feasible solution to $\cI$.
    \end{lemma}
    
Finally, we show that~\Cref{alg:3approxnew} reassigns no more points than \Cref{alg:helperalgo},
which is bounded by~$b$.

\begin{lemma}\label{lem:3apxvshelper}
    The number of points reassigned by solution $\AlgCluster$ of~\Cref{alg:3approxnew} is no more than the number of points reassigned by the solution $\AlgClustera$ of idealized~\Cref{alg:helperalgo}.
\end{lemma}

The proofs of~\Cref{lem:helperlemma} and~\Cref{lem:3apxvshelper} are deferred to~\Cref{ss:lmoneproof} and~\ref{ss:lemtwoproof}, respectively.

\emph{Running time:} The computation of set $S$ takes time $nk$, while the \textbf{for} loop takes time $k^2$.~\Cref{algo:3approx:c1} takes $O(k\log k)$ time, and the final assignment takes $O(nk)$ times. Therefore, the resulting time of~\Cref{alg:3approxnew} is $O(nk)$, finishing the proof of the theorem.
\end{proof}

\subsection{Guessing and verifying the optimal radius}\label{ss:assump}

\mpara{Guessing the optimal radius $\optRadius$.} As discussed in the previous sections, our algorithms assume that the optimal radius \optRadius is known and is given as input. This is a very common assumption in designing algorithms for \kcenter and its variants. This assumption can be removed by iterating over all~$n^2$ pairwise distances as a candidate for $\optRadius$, resulting in a multiplicative factor of $n^2$ in the overall running time. The running time can be sped up by sorting these distances and using binary search to find $\optRadius$. However, in practice, the distance aspect ratio $\Delta$, which is defined as the ratio of the maximum to the minimum distance, is often bounded polynomially in $n$. In such settings, it is standard to speed up the guessing of $\optRadius$ by discretizing all distances into powers of $(1+\epsilon)$, for a small $\epsilon>0$. This approach reduces the number of candidate radii from $n^2$ to $O(\log n/\epsilon)$, at the cost of introducing only a multiplicative $(1+\epsilon)$ factor in the approximation.

\mpara{Verifying the guess for the optimal radius $\optRadius$.} Since the idea is to run our algorithms (\Cref{alg:3approxnew} and \Cref{alg:twotwoapprox}) for every guess of $\optRadius$, we need to make sure that our algorithms return a \emph{feasible} solution.  
Note that our algorithms return the solution corresponding to the guess of \optRadius for which a feasible solution was found. To check the feasibility of a solution  $\AlgCluster=(\AlgClustercenter, \lalg)$ (\Cref{algo:3approx:ret} of~\Cref{alg:3approxnew} and \Cref{alg:fpt:ret} of~\Cref{alg:twotwoapprox}), the algorithm checks if $\clusterDist(\AlgCluster,\histCluster) \le b$.

\section{Experimental evaluation}
\label{section:experiments}

We empirically evaluate our algorithms, \twotwoapprox and \mainAlgo, on real-world datasets. We use four temporal datasets: \emph{Electric Consumption} \citep{individual_household_electric_power_consumption_235},  \emph{OnlineRetail} \citep{online_retail_352}, \emph{Twitter}  \citep{twitter_geospatial_data_1050}, and  \emph{Uber},%
\footnote{\url{https://www.kaggle.com/datasets/fivethirtyeight/uber-pickups-in-new-york-city}} as well as one non-temporal dataset, \emph{Abalone} \citep{abalone_1}. The description and preprocessing of these datasets is deferred to Appendix \ref{appendix:datasets}. 
Below we describe the experimental setup, 
baselines, 
and practical improvements of our methods. 
Finally, we present our results.

\subsection{Experimental setup}
\label{sec:setups}
We consider four  setups. In the main body of the paper, we focus on the first setup, which simulates accommodating historical clustering upon arrival of new data, and the second setup, which simulates temporal evolution of data. The third setup evaluates our algorithms on noisy data, and is presented in Appendix  \ref{appendix:eps_close_instances}, and the fourth setup explores scalability with respect to input parameters, as presented in Appendix \ref{appendix: scalability}.

\mpara{Baselines.} We evaluate our algorithms against three baselines:

\noindent
\text{(i) \carv} algorithm \cite{hochbaum1985best}, described in Algorithm \ref{alg:fft}.

\noindent
\text{(ii) \gonz} algorithm \cite{gonzalez1985clustering}, which starts with an arbitrary center, and repeats the following operations $k-1$ times: find the point furthest away from the current set of centers and add it to the set of centers.

\noindent
\text{(iii) Resilient $k$-clustering}. 
We implemented a version of the algorithm described by \citet{ahmadian2024resilient}. 
This algorithm first opens up to $\alpha \nbClusters$ centers in a resilient way, and then $\beta \nbClusters$ centers using \gonz, 
opening in total up to $(\alpha+\beta)\nbClusters$ centers.  
To ensure a fair comparison, we set $\alpha=\beta=0.5$, to open $k$ centers.
We  refer to this algorithm as~\resilient.

Next, we  detail the first two setups. We denote by \textsc{H{\large ist}} a historical clustering algorithm and  \textsc{A{\large lg}} an algorithm for \ourproblem. 

\spara{Setup 1: New data arrival.} 
Consider a dataset $\points$. First, we apply \gonz on $\points$ to find $3k$ clusters $X = C_1 \cup \ldots \cup C_{3k}$. Then, we consider the dataset $\newPoints = C_1 \cup \ldots \cup C_k\subseteq \points$. We run \textsc{H{\large ist}} on \newPoints and take the result to be the historical clustering~$\histCluster$, and define $\lh(x)$  for every $x\in\points\setminus\newPoints$ to be the closest center from $x$ in \histCluster. Finally, we run \textsc{A{\large lg}} on $X$ with \histCluster as historical clustering. We record the score of the output clustering, for fixed $k$ and varying $b$.

\spara{Setup 2: Evolutionary setting.} Consider a temporal dataset partitioned in $t$ slices $X=X_1\cup\ldots\cup X_t$, where every data slice has equal size. We sequentially cluster the data: first, we run \textsc{H{\large ist}} on $X_1$ to obtain $C_1$. Then, for every $i\in \{2,\ldots ,t\}$, we set $\histClustercenter=C_{i-1}$, and define $\lh(x)$ for every $x\in\points_i$ to be the closest center from $x$ in \histCluster, and we run \textsc{A{\large lg}} on $X_i$ with $\histCluster = (\histClustercenter, \lh)$ as historical clustering. In the experiments below, we set $t=20$. We record the score and the number of updates 
of the proposed solution on each of these data slices.

\mpara{Implementation choices.} In practice, we run both algorithms to get a solution with the aforementioned theoretical guarantees. Then, if there is budget remaining, we introduce a refinement phase, where we use the remaining budget to assign the furthest points to their closest center. Since this phase moves points to closer centers, it improves the objective value in practice, while preserving the theoretical guarantees.

Additionally, we modify \twotwoapprox to remove its exponential running time in $k$. Instead of considering every possible choice for $H^*$, we construct it greedily: we remove as many historical centers as the budget permits, and define the remaining historical centers as our guess of $H^*$. This modified version of  \twotwoapprox runs in polynomial time, though at the cost of losing its approximation guarantee.

\begin{figure*}[t]
\centering
\begin{subfigure}{\textwidth}
\includegraphics[width=.3\textwidth]{./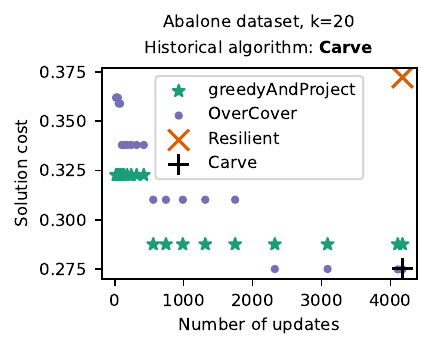}\hfill
\includegraphics[width=.3\textwidth]{./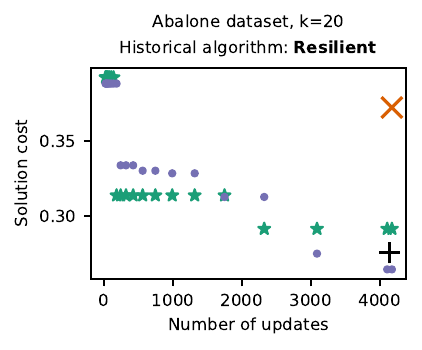}\hfill
\includegraphics[width=.3\textwidth]{./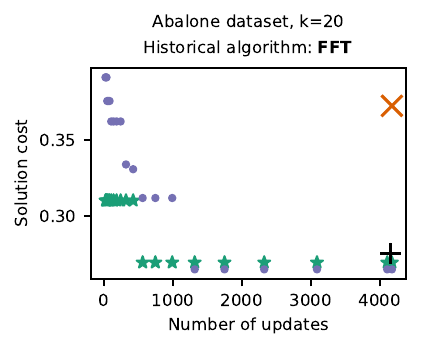}
\end{subfigure}
\begin{subfigure}{\textwidth}
\includegraphics[width=.3\textwidth]{./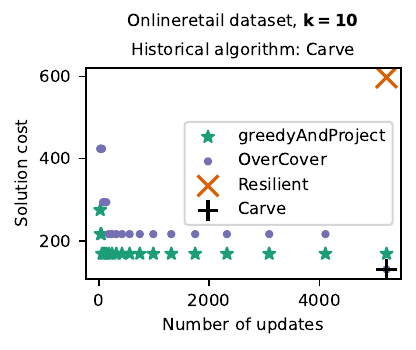}\hfill
\includegraphics[width=.3\textwidth]{./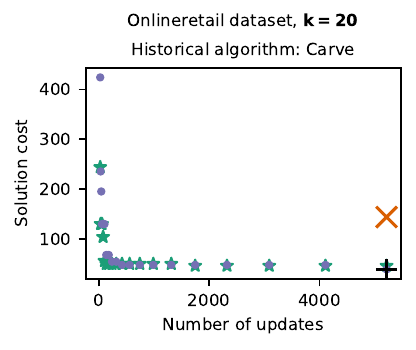}\hfill
\includegraphics[width=.3\textwidth]{./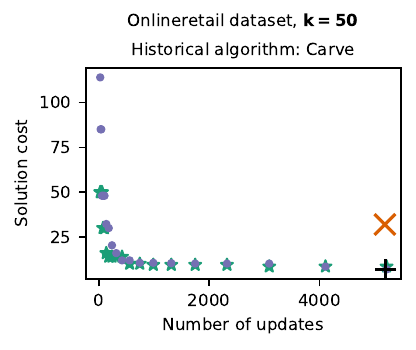}
\end{subfigure}
\hfill
\caption{Comparison of our algorithm with the baselines for the first experimental setup.}
\label{fig:setup1}
\end{figure*}

\begin{figure*}[t]
\centering
\begin{subfigure}{\textwidth}
\includegraphics[width=.58\textwidth]{./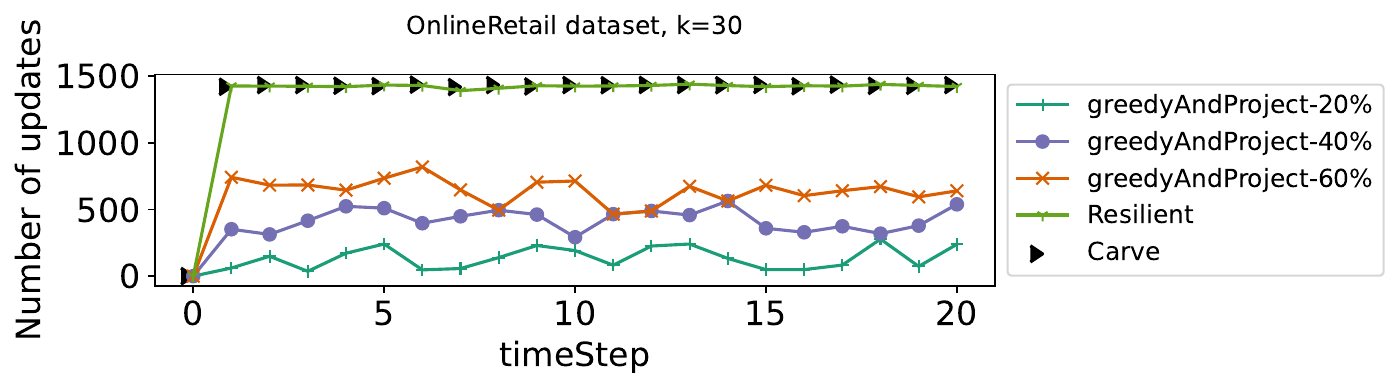}\hfill
\includegraphics[width=.41\textwidth]{./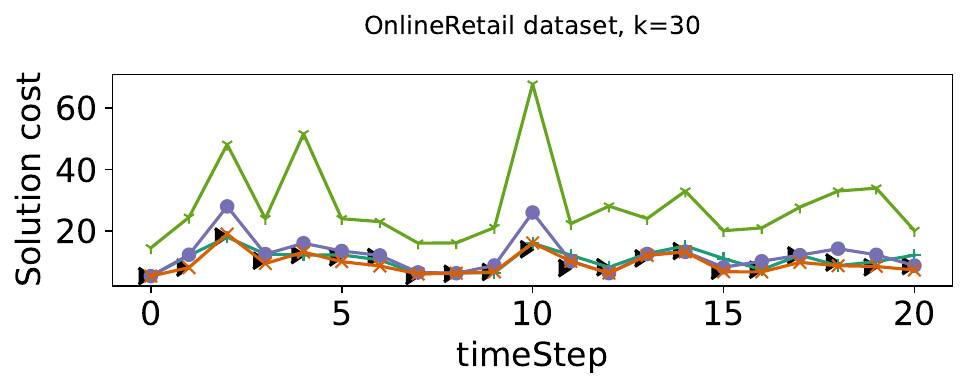}\hfill
\end{subfigure}
\hfill
\begin{subfigure}{\textwidth}
\includegraphics[width=.58\textwidth]{./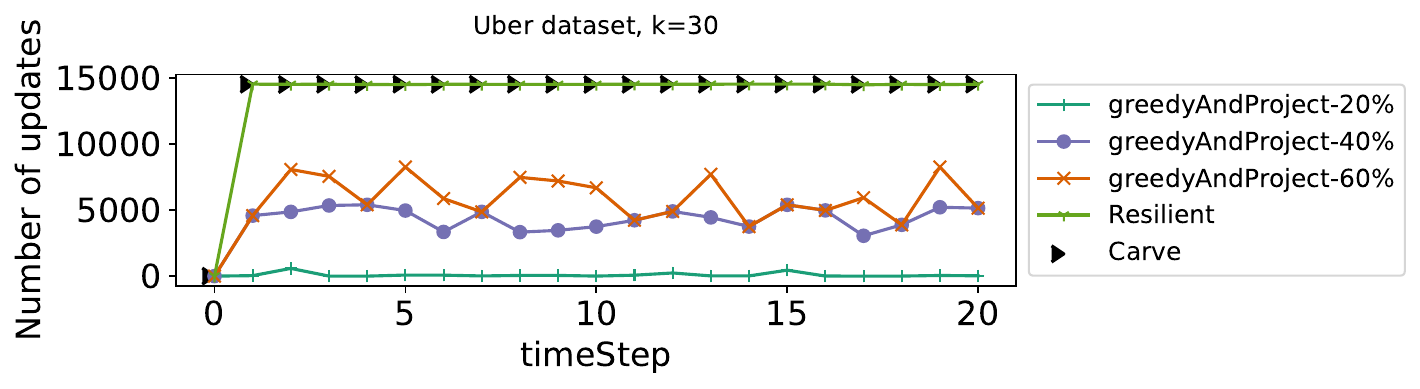}\hfill
\includegraphics[width=.41\textwidth]{./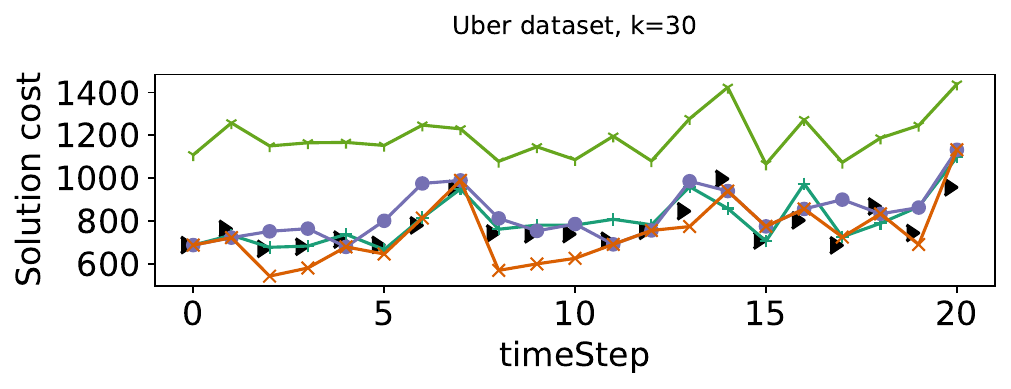}\hfill
\end{subfigure}
\caption{Comparison of our algorithm with the baselines for the second experimental setup.}
\label{fig:setup2}
\end{figure*}

\subsection{Empirical results}

In this section, we evaluate the output quality of our algorithms and baselines, for the two setups described above. We discuss results on Figure \ref{fig:setup1} and Figure \ref{fig:setup2}, while the complete results for the experiment is deferred to Appendix \ref{appendix:extraPlots}. To avoid clutter in the figures, we only report results of \resilient and \carv baselines, since \gonz performs similarly to \carv.

\spara{Setup 1.} In Figure \ref{fig:setup1}, we report results for the Abalone dataset, with varying historical clustering algorithm H\textsc{\large{ist}}$\in\{\carv, \gonz, \resilient\}$ and $k=20$, and results for the OnlineRetail dataset, with H\textsc{\large{ist}}$=$\carv and $k\in\{10,20,50\}$.

First, we evaluate the number of updates of each algorithms. Note that only one point per baseline algorithm is reported as their input ignores the budget $b$. We notice that almost all points are reassigned for all the baselines, while our algorithms respect the budget.

Next, in term of solution cost, both \mainAlgo and \twotwoapprox outperform \resilient, even for small number of updates, and perform similarly to \carv for larger values. For instance, for the OnlineRetail dataset, for $k=50$, 
reassigning as little as $9\%$ of the points brings the cost to within $44\%$ of that of \carv, while reassigning $49\%$ of the points further narrows the gap to just $22\%$.

Finally, \twotwoapprox usually achieves better cost than \mainAlgo for very high number of updates, since our implementation of \twotwoapprox coincides with \carv when $b=n$. However, \mainAlgo usually outperforms \twotwoapprox when less updates are permitted, suggesting better choices are made when choosing which historical center to preserve.

\spara{Setup 2.} Results are reported in Figure \ref{fig:setup2} for \mainAlgo, $k=30$ and two datasets, with results for \twotwoapprox and additional results for \mainAlgo are reported in Appendix \ref{appendix:extraPlots}. Here, $\mainAlgo\-\alpha\%$ represents the problem instance where $b=\frac{\alpha}{100} n$. The observations made previously in the previous setup carries in this temporal setup: \resilient performs poorly due to each data slice being quite dissimilar with the preceding one, and \carv usually performs better than \mainAlgo but reassigns all the points.

Note that more budget does not guarantee better performance on each individual data slice, as the set of historical centers depends on the output of the algorithm on the previous timestep. However, we notice that more budget allows the algorithm to perform better on most slices: in the Uber dataset, $\mainAlgo\-60\%$ outperforms $\mainAlgo\-20\%$ $80\%$ of the time.

\section{Conclusion and future work}
\label{section:conclusion}

We introduced a novel family of problems \labcons  based on the notion of consistency, and we proposed two constant-factor approximation algorithms in the context of $k$-center clustering.
Our main algorithm achieves a factor $3$ approximation in polynomial time against a lower bound of $2$, leaving open the question of closing this gap. It would also be valuable to study consistency applied to $k$-median or $k$-means, or to different data mining problems.

\section{Acknowledgments}
This research was supported by 
the ERC Advanced Grant REBOUND (834862),
the Swedish Research Council project ExCLUS (2024-05603),
and the Wallenberg AI, Autonomous Systems and Software Program (WASP) funded by the Knut and Alice Wallenberg Foundation.
Some of the computations were enabled by the National Academic Infrastructure for Supercomputing in Sweden (NAISS) and Swedish National Infrastructure for Computing (SNIC) partially funded by the Swedish Research Council through grant agreements no.~2022-06725 and 2018-05973


\section*{Ethics statement}

This work is theoretical in nature and does not involve the collection or use of private, sensitive, or personally identifiable data. No studies involving human subjects were conducted. The methods developed focus on clustering from an algorithmic perspective, and we are not aware of any direct negative societal or ethical implications. While clustering techniques can in principle be applied in sensitive contexts, our contribution is methodological and abstract, and we leave considerations of application-specific impacts to future work.

\section*{Reproducibility statement}

Complete proofs of all theorems presented in the main body of the paper are available in Appendix~\ref{Appendix:ommitedProofs}.
Moreover, all experimental results presented in the main body of the paper and in the appendix were produced using code publicly available at \url{https://github.com/tmaretteKTH/labelConsistentClustering/}.
The archive contains the code to download and preprocess the datasets, the implementation of all methods used in the experiments, as well as necessary code to reproduce the figures. All the plots presented here are, given enough time and compute resources, reproducible using one script.

\section*{LLM usage}

An LLM tool was used solely for light editing tasks such as grammar checking, typo correction, and other minor polishing.
No LLM nor any other kind of generative AI was used for the development of our research ideas, literature review, implementation of our methods, and analysis of results.

\bibliography{arxiv}
\clearpage

\appendix

\section{Pseudocode for \twotwoapprox}
\label{appendix:peusodcode}
Pseudocode for \twotwoapprox can be found in Algorithm~\ref{alg:twotwoapprox}, and pseudocode for \carv in Algorithm~\ref{alg:fft}.

\begin{minipage}[t]{0.57\textwidth}
\vspace{-15pt}
\begin{algorithm}[H]
\caption{\twotwoapprox \hfill\Comment{2-approximation in FPT time}}\label{alg:twotwoapprox}
\begin{algorithmic}[1]
\Require an instance $\cI=\consinst$ of \ourproblem,  optimal radius \optRadius
\Ensure consistent clustering $\AlgCluster=(\AlgClustercenter, \lalg)$ 
\State guess $H^*$ the set of preserved historical center in an optimal solution\label{alg:guesshist} 
\State $X_B \gets X \setminus \cup_{h \in H^*} \ball(h,r^*)$\label{alg:fpt:xb}
\State $S' \gets \carv(X_B, 2\optRadius)$\label{alg:fpt:fft}
\State $\AlgClustercenter\gets H^* \cup S'$
\For{$\point \in X$}\label{alg:fpt:assgn}
    \If{$\lh(x) \in H^*$ and $d(x,\lh(x))\le \optRadius$} 
      \State $\lalg(\point) = \lh(\point)$
    \Else\ $\lalg(\point) \gets \arg\min_{c\in \AlgClustercenter} \dist(\point, c)$\EndIf
\EndFor
\State \textbf{return} $\AlgCluster=(\AlgClustercenter, \lalg)$ \label{alg:fpt:ret}
\end{algorithmic}
\end{algorithm}
\end{minipage}
\hfill
\begin{minipage}[t]{0.33\textwidth}
\begin{algorithm}[H]
\caption{\carv}\label{alg:fft}
\begin{algorithmic}[1]
\Require Set $X'$ of points,  $r \in \mathbb{R}_{\ge 0}$
\Ensure Set $S' \subseteq X'$ 
\State $S' \gets \emptyset$
\While{$X' \neq \emptyset$}
    \State pick arbitrary point $s' \in X'$ 
    \State add $s'$ to $S'$
    \State $X' \gets X' \setminus \ball(s', $$r$)\label{fft:pickpoint}
\EndWhile
\State \Return $S'$
\end{algorithmic}
\end{algorithm}
\end{minipage}

\section{Omitted proofs}
\label{Appendix:ommitedProofs}

\subsection{Proof of~\Cref{thm:hardness}}\label{ss:hardproof}
    The proof follows from a simple polynomial time reduction from \kcenter. Given an instance $\cJ=((X,d),k)$ of \kcenter, we construct an instance $\cI=\consinst$ as follows: we pick an arbitrary set $\histClustercenter \subseteq X$ of size $k$,   assign every point $x \in X$ to a closest center in $\histClustercenter$, i.e., $\lh(x)= \argmin_{h \in \histClustercenter} d(x,h)$, and finally, set $b=|X|$. First, note that this is a polynomial time reduction in $|X|$. Next, we claim that the optimal cost of $\cI$ is equal to the optimal cost of $\cJ$. Towards this, let $r^*_\cI$ and $r^*_\cJ$ be the optimal costs of $\cI$ and $\cJ$, respectively. It is easy to see that $r^*_\cJ \le r^*_\cI$ since any feasible solution to $\cI$ is also a solution to $\cJ$. For the other direction, note that any  solution to $\cJ$ is also a feasible solution to $\cI$ since $b=|X|$. Hence, we have $r^*_\cJ = r^*_\cI$. $i)$  and $ii)$ follows since, for any $\epsilon>0$, \kcenter is both \NP-hard and \wtwo-hard w.r.t. parameter $k$ to approximate to a factor $(2-\epsilon)$~\cite{DBLP:books/daglib/0004338}.
    Finally, for $iii)$,  note that any  solution to $\cJ$ 
    reassigns at most $n \le g(n)=g(b)$, as $b=n$, points, and hence is a feasible solution to $\cI$, as well.\hfill\qed
\subsection{Proof of~\Cref{theorem:2:2}} \label{ss:fptproof}
\twotwoapprox is our algorithm for \ourproblem, which is described in~\Cref{alg:twotwoapprox}. 
Let $\optCluster=(\optClustercenter,\lopt)$ be a fixed but unknown optimal solution to $\cI$ with cost $\optRadius$.
For the analysis, we assume that the algorithm has a correct guess of $\optRadius$ and $H^*=\optClustercenter \cap \histClustercenter$. 
We first claim that $\clusterDist(\AlgCluster,\histCluster) \le b$. Towards this, let $X'_B = \{x \in X \vert \lh(x) \notin H^* \lor d(x,\lh(x))> \optRadius\}$. Then, note that $\optCluster$ has to reassign every $x \in X'_B$ to a center other than $\lh(x)$ in $\optClustercenter$ since $d(x,\optClustercenter) \le \optRadius$. Since, $\clusterDist(\optCluster,\histCluster)\le b$, we have $|X'_B| \le b$. The claim follows since~\Cref{alg:twotwoapprox}  reassigns points only in $X'_B$ in the \textbf{for} loop.

Next, we claim that $|C| = |H^* \cup S| \le k$. Towards this, let $\ell = |H^*| \le k$, then we have to show $|S| \le k-\ell$. Consider the set $X_B$ defined in~\Cref{alg:fpt:xb}, and assume $X_B \neq \emptyset$. Let $\optClustercenter_B = \optClustercenter \setminus H^*$. Then, note that $\optClustercenter_B \neq \emptyset$ since otherwise $\costp{\infty}(\optCluster)>\optRadius$, a contradiction to $\optCluster$. Furthermore, for every $x \in X_B$, it holds that $d(x,\optClustercenter_B) \le \optRadius$. Therefore, $|S| \le |\optClustercenter_B| \le k -\ell$, since $(X_B,2\optRadius)$ picks at most one point from each cluster of $\optClustercenter_B$ (\Cref{fft:pickpoint}).

Finally, for the approximation guarantee, note that for $x \in X \setminus X_B$, we have $d(x,\labc(x)) = d(x,\lopt(x)) \le \optRadius$, while for $x \in X_B$, we have $d(x,\labc(x))=d(x,C) \le d(c,S) \le 2\optRadius$.
For the running time, it is easy to see that the algorithm runs in time $O(nk)$, since $\carv(X_B,2\optRadius)$ runs in time $nk$.

Guessing of $H^*$ can be done by considering every subset of $\histClustercenter$ as a candidate for $H^*$, which results in $2^k$  iterations of~\Cref{alg:twotwoapprox}. The algorithm also needs to verify whether or not the guesses of $H^*$  and $r^*$ are correct, which can be done by checking if the solution $\AlgCluster$ satisfies $|C| \le  k$ and $\clusterDist(\AlgCluster,\histCluster)\le b$. Finally, the algorithm returns  a minimum-cost feasible solution.  Therefore, the overall running time is bounded by $O(n^2 \log n + 2^k nk \log n)$. \hfill\qed

\subsection{Proof of~\Cref{lem:helperlemma}}\label{ss:lmoneproof}
 \begin{figure}
	\begin{center}
		\includegraphics[width=0.55\textwidth]{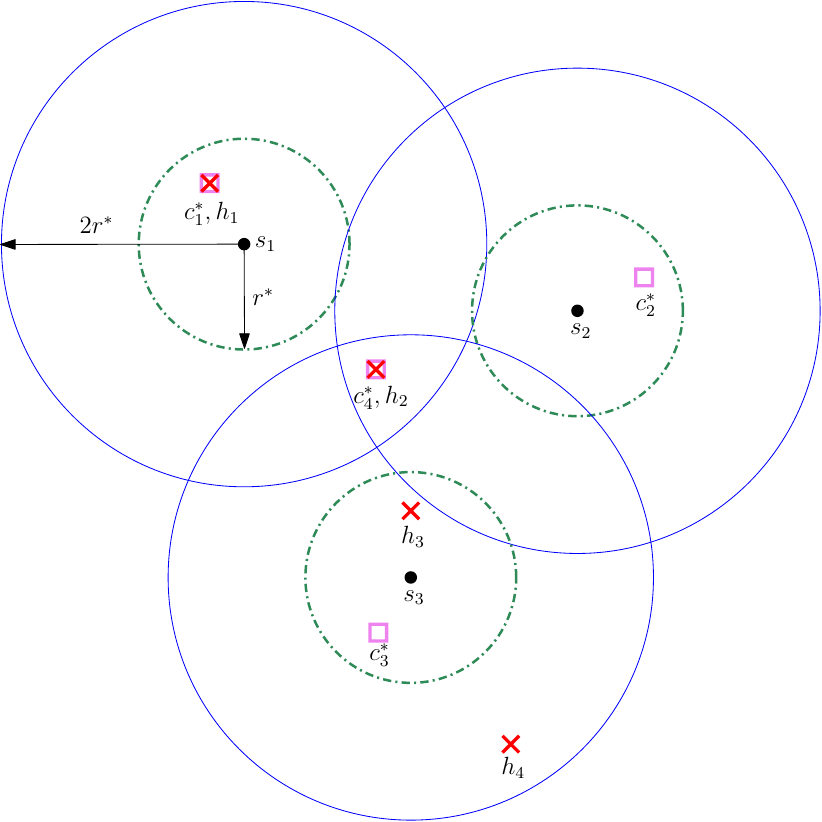}
	\end{center}
        \caption{In the above figure, historical centers are represented by cross, optimal centers are represented by squares, and the elements of $S$ are represented as discs. For simplicity, we do not show remaining points of $X$. It can be seen that $H^*=\{h_1,h_2\}$, while $H^*_c=\{h_1\}$, and $H^*_f=\{h_2\}$. Also, $N_\histClustercenter(s_1)=\{h_1\}$, while $N_\histClustercenter(s_3)=\{h_3\}$. Furthermore, $\Gamma^*_{s_1} = \{h_1\} =H^*_c$, while all other $\Gamma^*$s are empty. Finally, suppose $w(h_4) > w(h_2)$, then the center set chosen by~\Cref{alg:3approxnew} is $C=\{h_1,s_2,h_3,h_4\}$, and hence $w(C \cap \histClustercenter) = w(h_1)+ w(h_3) + w(h_4) > w(h_1) + w(h_2) = w(\optClustercenter \cap \histClustercenter)$
        , and $\costp{\infty}(\AlgCluster) \le 3\optRadius$.}
        \label{fig:f1}
\end{figure}
Consider $s \in S$. Then, we know that $\Gamma^*_s$ are precisely the historical centers present in the optimal solution that are covered by $s$ (see~\Cref{fig:f1} for an illustration). On the other hand, let $\Gamma_s$ be the historical centers picked by the algorithm in $C'_0 \cup C'_1$, i.e., $\Gamma_s = \hat{\Gamma}_s \cup \{\hat{s}\}$.
        Note that, for $s\neq s' \in S$, we have that $N_\histClustercenter(s) \cap N_\histClustercenter(s') = \emptyset$, since $d(s,s')>2r^*$, and hence, $\hat{s} \neq \hat{s}'$ and $\Gamma^*_s \cap \Gamma^*_{s'} = \emptyset$,
 which in turn implies that ${\Gamma}_s \cap {\Gamma}_{s'} = \emptyset$. Therefore, we have that, the total number of historical centers in $C'_0 \cup C'_1$ is lower bounded by $\sum_{s \in S_\gamma} |\Gamma_s| = \sum_{s \in S_\gamma} |\Gamma^*_s| =  |H^*_c|$. 
Recall that, $H^*$ is the set of historical centers present in the optimal solution $\optClustercenter$, and $w(H^*_c) + w(H^*_f)=w(H^*) \ge n-b$. 
       We start by showing that the weight of the historical centers present in $C'_0 \cup C'_1$ is at least the weight of the hidden historical centers, $w(H^*_c)$.
        \begin{claim}
        \label{cl:help:reasgn}
         $w(\histClustercenter \cap (C'_0 \cup C'_1)) \geq w(H^*_c)$.
        \end{claim}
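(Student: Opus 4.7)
\medskip

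\noindent\textbf{Proof proposal.} The plan is to localize the weight comparison on a per-$s$ basis, using the disjointness of the neighborhoods $N_{\histClustercenter}(s)$, and then sum up. Specifically, for each $s \in S_\gamma$ define $\Gamma_s := \hat{\Gamma}_s \cup \{\hat{s}\}$, the set of historical centers that the idealized algorithm commits to on account of $s$. I would first show that the family $\{\Gamma_s\}_{s\in S_\gamma}$ is pairwise disjoint and that each $\Gamma_s$ is a subset of the historical centers selected into $C'_0 \cup C'_1$. Since $s\neq s'$ in $S$ satisfy $d(s,s')>2r^*$, the neighborhoods $N_{\histClustercenter}(s)$ are disjoint, hence both $\{\Gamma^*_s\}$ and $\{\Gamma_s\}$ are disjoint families, and therefore $w(\histClustercenter \cap (C'_0 \cup C'_1)) \geq \sum_{s\in S_\gamma} w(\Gamma_s)$ while $w(H^*_c) = \sum_{s\in S_\gamma} w(\Gamma^*_s)$. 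So it suffices to show $w(\Gamma_s) \geq w(\Gamma^*_s)$ for each individual $s \in S_\gamma$.

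Next, I would prove this pointwise inequality by a short case analysis. By construction $|\Gamma_s| = |\Gamma^*_s|$ and $\Gamma_s, \Gamma^*_s \subseteq N_{\histClustercenter}(s)$. \emph{Case 1:} If $\hat{s} \in \Gamma^*_s$, then $\hat{\Gamma}_s = \Gamma^*_s \setminus \{\hat{s}\}$ since we are forced to take all of $\Gamma^*_s \setminus \{\hat{s}\}$ (it has exactly $|\Gamma^*_s|-1$ elements), which gives $\Gamma_s = \Gamma^*_s$ and the inequality is an equality. \emph{Case 2:} If $\hat{s} \notin \Gamma^*_s$, then $\Gamma_s$ is obtained from $\Gamma^*_s$ by removing the minimum-weight element of $\Gamma^*_s$ and inserting $\hat{s}$; since $\hat{s}$ is the maximum-weight element of $N_{\histClustercenter}(s)$ and $\Gamma^*_s \subseteq N_{\histClustercenter}(s)$, we have $w(\hat{s}) \geq w(h)$ for every $h \in \Gamma^*_s$, and in particular for the element removed. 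Hence $w(\Gamma_s) \geq w(\Gamma^*_s)$ also in this case.

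Summing the pointwise inequality over $s \in S_\gamma$ and using the disjointness identities from the first step yields
\[
 w(\histClustercenter \cap (C'_0 \cup C'_1)) \;\geq\; \sum_{s \in S_\gamma} w(\Gamma_s) \;\geq\; \sum_{s \in S_\gamma} w(\Gamma^*_s) \;=\; w(H^*_c),
\]
which is the desired bound.

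The main obstacle I anticipate is keeping Case 2 of the pointwise step clean: one must be careful that $\hat{\Gamma}_s$ is chosen inside $\Gamma^*_s \setminus \{\hat{s}\}$ (not inside the larger $N_{\histClustercenter}(s) \setminus \{\hat{s}\}$), so the swap effectively trades one element of $\Gamma^*_s$ for $\hat{s}$, and the gain comes purely from $\hat{s}$ being the heaviest element of the whole neighborhood. Everything else is bookkeeping that relies on the disjointness consequence of $d(s,s')>2r^*$, which is already established.
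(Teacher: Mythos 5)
Your proposal is correct and follows essentially the same route as the paper's proof: use the disjointness of the neighborhoods $N_{\histClustercenter}(s)$ (from $d(s,s')>2r^*$) to reduce to the per-$s$ inequality $w(\Gamma_s)\ge w(\Gamma^*_s)$, then prove that inequality by the case split on whether $\hat{s}\in\Gamma^*_s$. Your Case 2 phrasing (swap out the minimum-weight element of $\Gamma^*_s$ for the heavier $\hat{s}$) is a slightly cleaner rendering of the paper's sorted-list comparison, but it is the same argument.
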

        \begin{proof}
            First note that $w(\histClustercenter \cap (C'_0 \cup C'_1)) \ge w(\cup_{s \in S_\gamma} \Gamma_s)$ since $\histClustercenter \cap (C'_0 \cup C'_1))= \cup_{s \in S} \Gamma_s \supseteq  \cup_{s \in S_\gamma} \Gamma_s$. Hence, it is sufficient to show that $w(\cup_{s \in S_\gamma}\Gamma_s) \ge w(\cup_{s \in S_\gamma}\Gamma_s^*)=w(H^*_c)$. 
            Towards this, we show that for $s\in S_\gamma$, it holds that $w(\Gamma_s) \ge w(\Gamma^*_s)$, which implies that $w(\cup_{s \in S_\gamma}\Gamma_s) = \sum_{s \in S_\gamma}w(\Gamma_s) \ge  \sum_{s \in S_\gamma}w(\Gamma^*_s)=w(\cup_{s \in S_\gamma}\Gamma^*_s) =w(H^*_c)$, as required.
            
            Consider the \textbf{for} loop execution at~\Cref{alg:helperalgo:for} for $s \in S_\gamma$, and 
            consider $\Gamma_s= \{\hat{s}\} \cup \hat{\Gamma}_s$ and let $\Gamma^*_s = (h^1_s, h^2_s, \dots h^{|\Gamma^*_s|}_s)$ be ordered in non-increasing order of the weights of the historical optimal centers.

            Then, note that $w(\hat{s}) \ge w(h^1_s)$ as the set $\Gamma^*_s$ is available for the algorithm to be picked for $\hat{s}$. Now consider the case when $\hat{s} \notin \Gamma^*_s$. 
            Since all the centers in $\Gamma^*_s$ are available to be picked by the algorithm in~\Cref{alg:helperalgo:pickadd}, we have that $w(\hat{\Gamma}_s) \ge w(\{h_s^1,\dots,h_s^{|\Gamma^*_s|-1}\})$, and thus, $w(\Gamma_s) = w(\hat{s}) + w(\hat{\Gamma}_s) \ge w(h^1_s) + w(\{h_s^1,\dots,h_s^{|\Gamma^*_s|-1}\}) \geq w(\Gamma^*_s)$, as required. Now consider the case when $\hat{s} \in \Gamma^*_s$, and therefore $\hat{s}=h^1_s$ since it is the highest weight historical center in $\Gamma^*_s$. This means that, the centers $\{h^2_s,\dots, h^{|\Gamma^*_s|}_s\}$ are available for the algorithm to pick in~\Cref{alg:helperalgo:pickadd}. Therefore, $w(\Gamma_s) \ge w(h^1_s) + w(\{h_s^2,\dots,h_s^{|\Gamma^*_s|}\} = w(\Gamma^*_s)$, as required.
            \end{proof}

Next, we show that $|C'| \le k$.
        Towards this, suppose $|S|=\ell$, and consider the optimal clusters $\{\pi^*_i\}_{i \in [k]}$. We say that a optimal cluster $\pi^*_i$ is \emph{hit} by $S$ if $\exists s \in S$ such that $s$ belongs to cluster $\pi^*_i$. Since, the size of $S$ is $\ell$, and elements of $S$ belong to different clusters of the optimal solution, we have that $S$ hits exactly $\ell$ optimal clusters. Therefore, the number of \emph{unhit} optimal clusters is $k-\ell$. On the other hand, $|C'_1 \cup C'_2|=|C'_1| + |C'_2| = \sum_{s \in S_\gamma} (|\Gamma^*_s|-1) + |H^*_f|$ is upper bounded by the number of unhit optimal clusters, while $|C'_0| = |S| =\ell$. Hence, we have $|C'| = |C'_0| + |C'_1| + |C'_2| \le k$, as desired.\hfill\qed

        Now, we show that $w(\histClustercenter \cap C') \ge n-b$. 
        Consider the hidden historical optimal centers, $H^*_f$, to $S$. Since, it holds that $H^*_f \cap (\cup_{s \in S} N_\histClustercenter(s))=\emptyset$, we have that $ H^*_f \cap (C'_0 \cup C'_1) =H^*_f \cap (\cup_{s \in S} \Gamma_s) = \emptyset$. Thus,  all the centers in $H^*_f$ are available to be picked by the algorithm in~\Cref{alg:helperalgo:pickhid} for $C'_2$. This means $w(C'_2) \ge w(H^*_f)$, and hence 
            \begin{align*}
            w(\histClustercenter \cap C') &= w(\histClustercenter \cap (C'_0 \cup C'_1)) +w(C'_2)\\
            &\ge w(H^*_c)+w(H^*_f)\\
            &= w(H^*)\\
            &\ge n-b,   
        \end{align*}
        where the first equality follows since the sets $C'_0,C'_1,C'_2$ are pairwise disjoint and the fact that $\histClustercenter \cap C'_2 = C'_2$, and the first inequality follows due to the above claim.
        Now, consider the \textbf{for} loop in~\Cref{alg:helper:assgn} that assigns points to a center in $C$. Note that for every $x \in X$, it assigns $x$ to $\lh(x)$ if $\lh(x) \in C'$ and $d(x,\lh(x)) \le \optRadius$. Therefore, for every historical center $h \in C'\cap \histCluster$, the number of points assigned to $h$ is at  least $w(h)$. Hence, the number of points reassigned by~\Cref{alg:helperalgo} is at least $w(\histClustercenter \cap C')\geq n-b$, finishing the proof of the lemma.
        \hfill\qed
\subsection{Proof of~\Cref{lem:3apxvshelper}}\label{ss:lemtwoproof}
 
    Consider the center sets $\AlgClustercenter=C_0 \cup C_1$ and $\AlgClustercentera=C'_0 \cup C'_1\cup C'_2$ returned by~\Cref{alg:3approxnew} and~\Cref{alg:helperalgo}, respectively. Then, note that $C_0 = C'_0$. Consider~\Cref{algo:3approx:c1} of~\Cref{alg:3approxnew} that constructs $C_1$. Since $C_0=C'_0$, we have that all the historical centers in the set $C'_1 \cup C'_2$ picked by~\Cref{alg:helperalgo} are available for~\Cref{alg:3approxnew} as candidate centers for $C_1$ in~\Cref{algo:3approx:c1}.
    Furthermore, $|C'_1\cup C'_2| \leq (k-|S|) = (k-|C_0|)$. Therefore, we have that $w(C_1) \ge w(\histClustercenter \cap (C'_1\cup C'_2))$, and hence
    \begin{align*}
        w(\histClustercenter \cap C) &= w(\histClustercenter \cap (C_0 \cup C_1))\\
        &= w(\histCluster \cap C_0) + w(C_1)\\
        &\geq  w(\histCluster \cap C'_0) + w(\histClustercenter \cap (C'_1\cup C'_2))\\
        &\geq w(\histCluster \cap C')\\
        &\geq n-b.
    \end{align*}
    The lemma follows since in the assignment routine (\textbf{for} loop in~\Cref{algo:3approx:assgn}) of~\Cref{alg:3approxnew},  the number of points assigned to $h\in C\cap \histCluster$ is at  least $w(h)$.\hfill\qed

\subsection{Proof of \Cref{thm:mainthm}}\label{ss:mainthmproof}
\Cref{alg:3approxnew} gives a $3$-approximation for \ourproblem\ by \Cref{thm:3approx}, but it requires the optimal radius $r^*$ as an additional input.  
A straightforward approach is to try all $n^2$ pairwise distances as guesses for $r^*$ and return the minimum-cost feasible solution.  
However, we can accelerate this by performing a binary search over these $n^2$ candidate values, yielding a total running time of 
$O(n^2 \log n + nk \log n)$, as desired.\qed

\section{Additional experimental results}
\subsection{Datasets description}
\label{appendix:datasets}
The \emph{Abalone} dataset \cite{abalone_1} dataset records different measurements on abalones. 
We keep all features, except the sex 
and the number of rings.
The dataset contains 7 features and 4176 entries.

The \emph{Electric Consumption} dataset \cite{individual_household_electric_power_consumption_235}
dataset contains 207.5 million  power measurement in Sceaux, France, 
between Dec.~2006 and Nov.~2010. 
We retain measurements occurred during the first 20 months, for a total of 439820 entries, and use 7 numerical features. 

The \emph{OnlineRetail} dataset \cite{online_retail_352} records 54.2 million transactions of 
online retail, occurring from Dec.~2010 to Dec.~2011. We retain the two numerical features, and transactions occurring between Jan. 1, 2011 and Jan. 20, 2011, for a total of 28300 transactions.

The \emph{Twitter} dataset \cite{twitter_geospatial_data_1050} record seven days of geo-tagged Tweet data from the United States, sent between Jan 12, 2013 and Jan 18, 2013. We keep tweets sent during the first 20 hours, for a total of 289860 tweets.

The \emph{Uber} dataset\footnote{ \url{https://www.kaggle.com/datasets/fivethirtyeight/uber-pickups-in-new-york-city}} contains around 18.8 million Uber pickups in New York City 
from April to June 2015. We retain pickups between Jun 1, 2014 and Jun 20, 2014, for a total of 290200 pickups.

In both the Twitter dataset and the Uber dataset, we convert the angular Geo location to Cartesian coordinates, for a total of 3 features. 

\subsection{Experiments on $\varepsilon$-close point sets}
\label{appendix:eps_close_instances}
This additional experiments was conducted to study the empirical performance of our algorithm on instances where when data are affected by some noisy perturbation, which are instances where \resilient is designed to perform well.

We use the Uber dataset, and follow the same pre-processing method as by \citet{ahmadian2024resilient}: 
we consider the pickup locations of the first two days of June 2014, and convert the angular Geo location to Cartesian coordinates. Then, we find a minimum-weight perfect matching between points of the first and second day, and remove unmatched locations as well as matched locations located more than 1\,km apart. Finally, we denote by $X'$ the dataset for the first day, and $X$ the dataset for the second day. We run \textsc{H{\large ist}} on $X'$ to obtain \histCluster, and convert the historical centers \histClustercenter  to their matched points in $X$. Then, we run \textsc{A{\large lg}} on $X$ with \histCluster as historical clustering.

\begin{figure*}
\centering
\begin{subfigure}{\textwidth}
\includegraphics[width=.33\textwidth]{./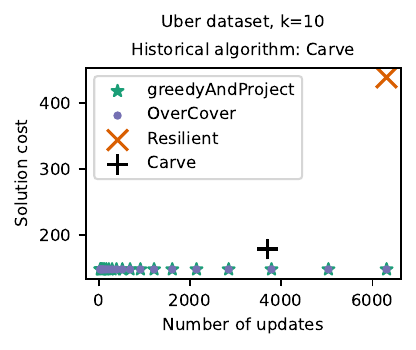}\hfill
\includegraphics[width=.33\textwidth]{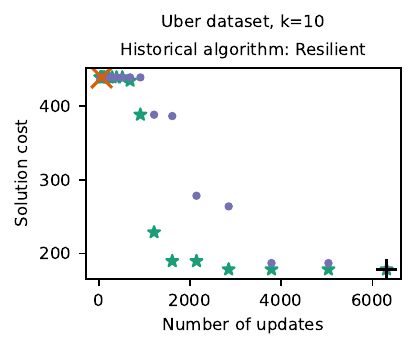}\hfill
\includegraphics[width=.33\textwidth]{./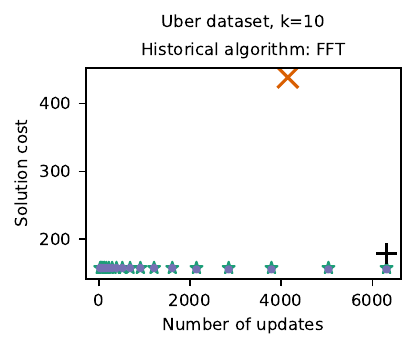}\hfill
\end{subfigure}\hfill
\begin{subfigure}{\textwidth}
\includegraphics[width=.33\textwidth]{./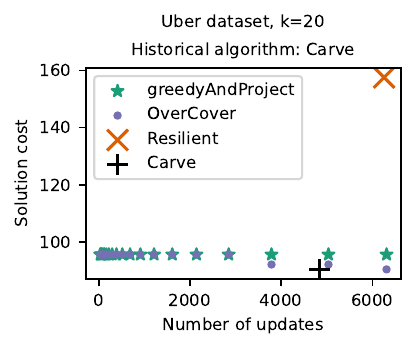}\hfill
\includegraphics[width=.33\textwidth]{./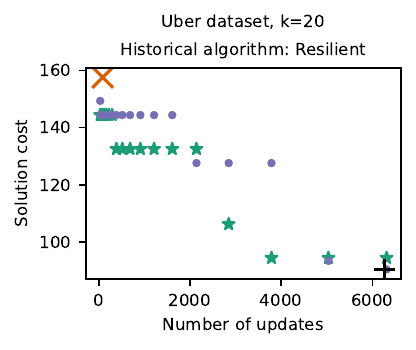}\hfill
\includegraphics[width=.33\textwidth]{./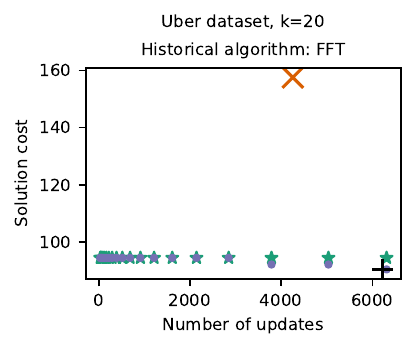}\hfill

\end{subfigure}
\hfill
\begin{subfigure}{\textwidth}
\includegraphics[width=.33\textwidth]{./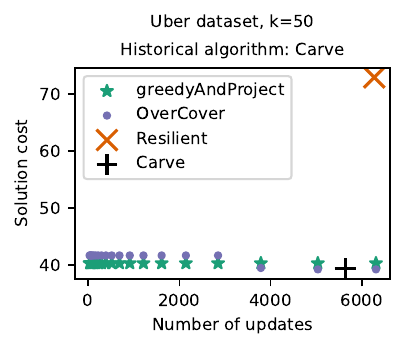}\hfill
\includegraphics[width=.33\textwidth]{./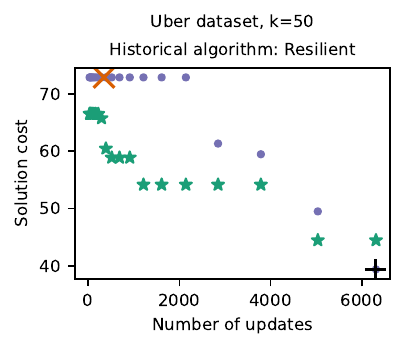}\hfill
\includegraphics[width=.33\textwidth]{./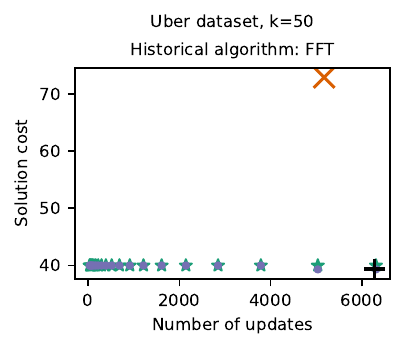}\hfill

\end{subfigure}

\caption{Comparison of our algorithm with the baselines for the third experimental setup.}
\label{fig:additional}
\end{figure*}

\mpara{Empirical results.} Results for this setup are reported in Figure \ref{fig:additional}. Since the points in the first and second day datasets are similar, \resilient manages to find similar clusters, provided that \resilient or \gonz is used as historical clustering. For example, \resilient achieves as little as $2\%$ reassigned points when using \resilient historical centers with $k=30$, and $80\%$ when using \gonz as historical clustering. This cross-compatibility exists because \resilient uses \gonz as a subroutine to find some centers.  Note that the consistency is lost when the historical clustering comes from a different clustering algorithm, such as \carv. Note that both \gonz and \carv achieve lower clustering cost than \resilient.

Meanwhile, with both \twotwoapprox and \mainAlgo, we can perform similarly to \resilient and \carv, on both the number of reassigned points and clustering score, with an additional control of in-between number of updates. 
\subsection{Running Time Analysis}
\label{appendix: scalability}
We compare running time of \mainAlgo and \twotwoapprox for different $k$ values, on the Twitter dataset, and report the running times on Figure \ref{fig:scalability}. First, we can notice that the running time seems independent of the number of updates $b$. Furthermore, for fixed value of $k$, the running time scales of both algorithm scales well with respect to the size of the dataset.

\begin{figure*}
\centering
\begin{subfigure}{\textwidth}
\includegraphics[width=.33\textwidth]{./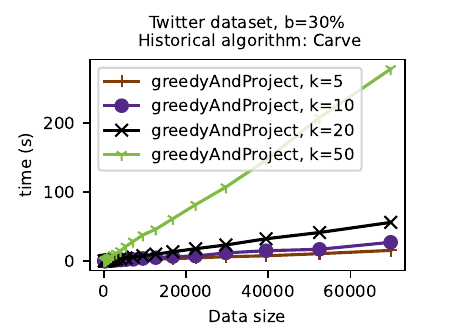}\hfill
\includegraphics[width=.33\textwidth]{./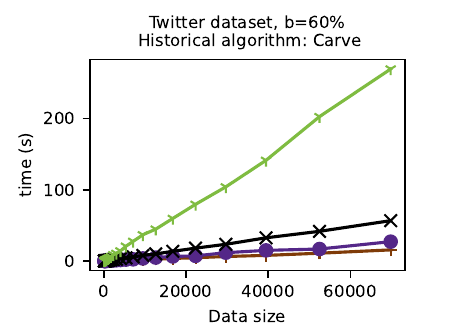}\hfill
\includegraphics[width=.33\textwidth]{./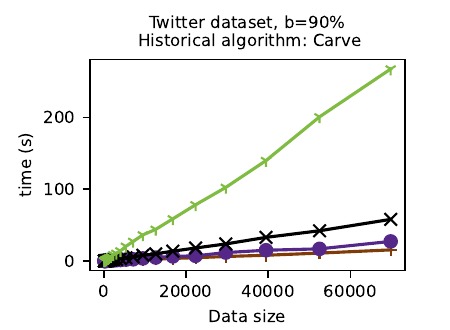}
\end{subfigure}
\hfill
\begin{subfigure}{\textwidth}
\includegraphics[width=.33\textwidth]{./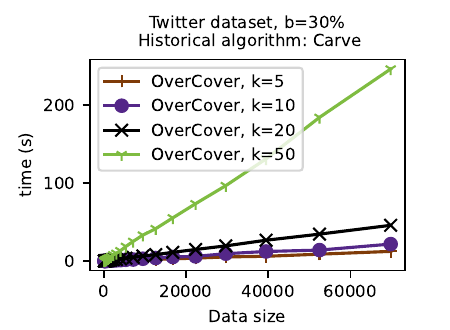}\hfill
\includegraphics[width=.33\textwidth]{./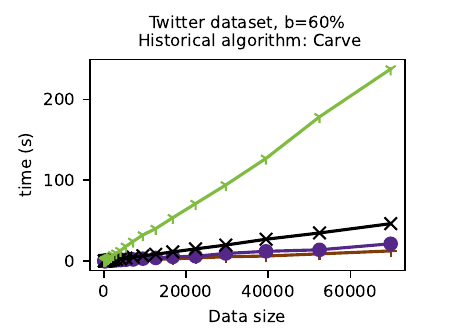}\hfill
\includegraphics[width=.33\textwidth]{./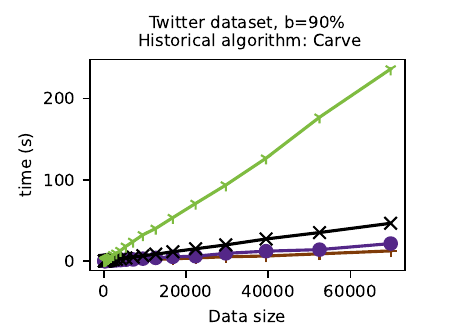}
\end{subfigure}
\caption{Scalability of our methods}
\label{fig:scalability}
\end{figure*}

\subsection{Additional results for experimental setup 1 and 2}
\label{appendix:extraPlots}
We provide additional plots for the experimental setups described in Section \ref{sec:setups}. For setup 1, this includes all combinations of historical clustering and datasets, for $k=10$ in Figure \ref{fig:extraPlots1.1}, $k=20$ in Figure \ref{fig:extraPlots1.2} and $k=50$ in Figure \ref{fig:extraPlots1.3}. For setup 2, this includes plots for the Electricity and Twitter dataset for \mainAlgo in Figure \ref{fig:extraPlots2.1}, and plots for \twotwoapprox in Figure \ref{fig:extraPlots2.2}.

\begin{figure*}
\centering
\begin{subfigure}{\textwidth}
\includegraphics[width=.3\textwidth]{./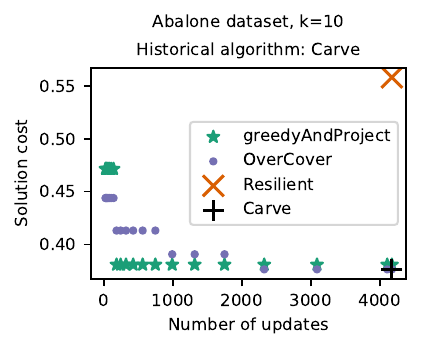}\hfill
\includegraphics[width=.3\textwidth]{./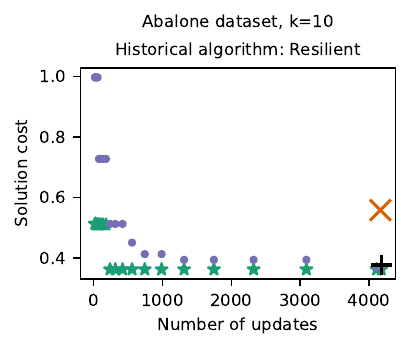}\hfill
\includegraphics[width=.3\textwidth]{./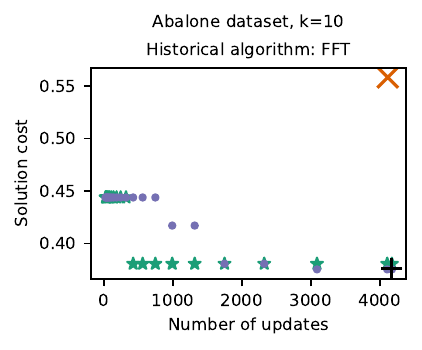}
\end{subfigure}
\hfill
\begin{subfigure}{\textwidth}
\includegraphics[width=.3\textwidth]{./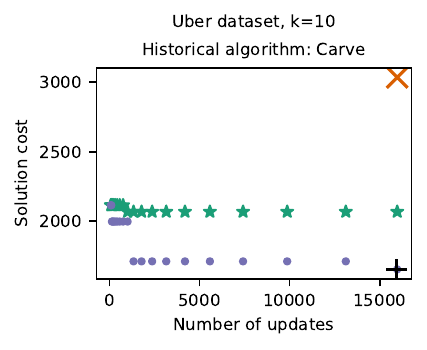}\hfill
\includegraphics[width=.3\textwidth]{./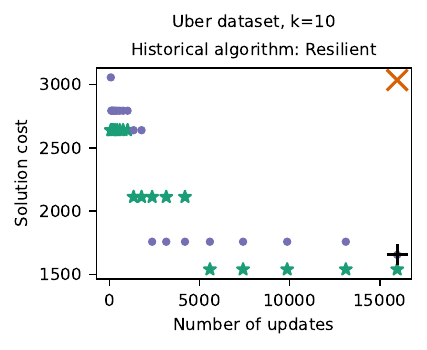}\hfill
\includegraphics[width=.3\textwidth]{./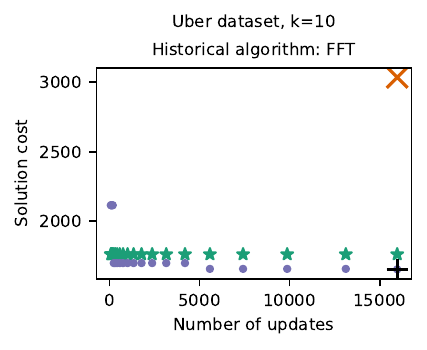}
\end{subfigure}
\hfill
\begin{subfigure}{\textwidth}
\includegraphics[width=.3\textwidth]{./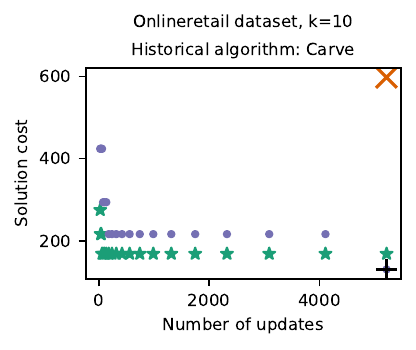}\hfill
\includegraphics[width=.3\textwidth]{./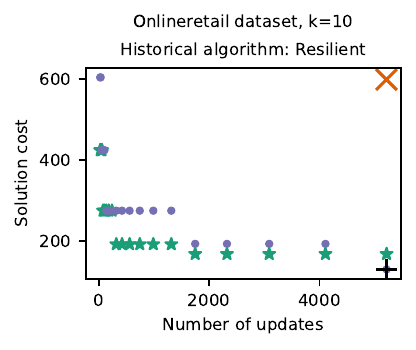}\hfill
\includegraphics[width=.3\textwidth]{./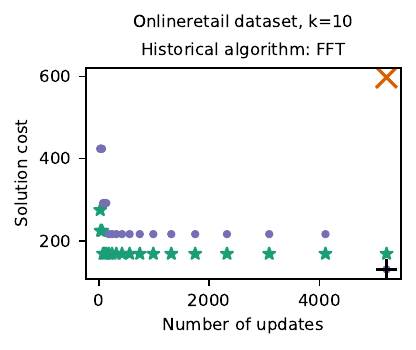}
\end{subfigure}
\hfill
\begin{subfigure}{\textwidth}
\includegraphics[width=.3\textwidth]{./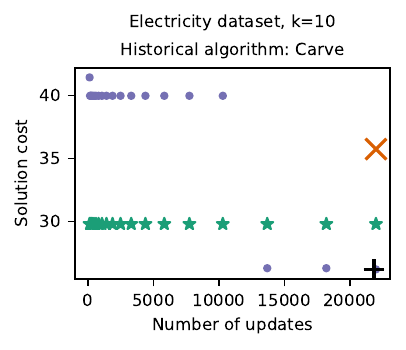}\hfill
\includegraphics[width=.3\textwidth]{./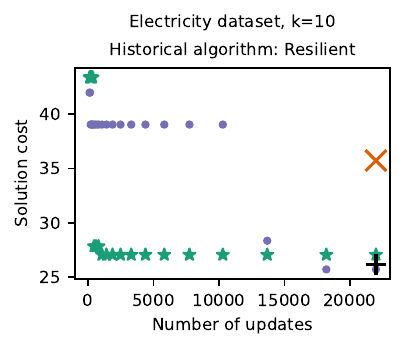}\hfill
\includegraphics[width=.3\textwidth]{./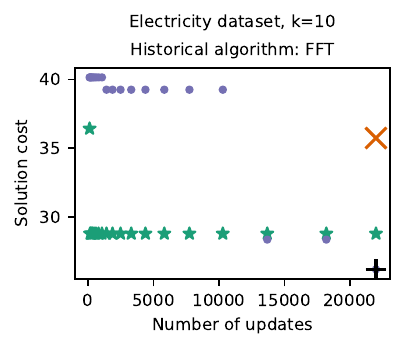}
\end{subfigure}
\hfill
\begin{subfigure}{\textwidth}
\includegraphics[width=.3\textwidth]{./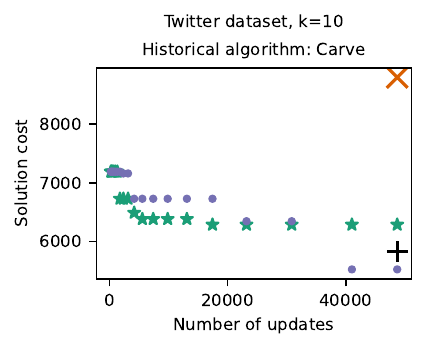}\hfill
\includegraphics[width=.3\textwidth]{./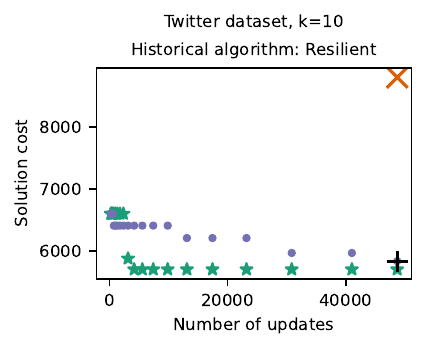}\hfill
\includegraphics[width=.3\textwidth]{./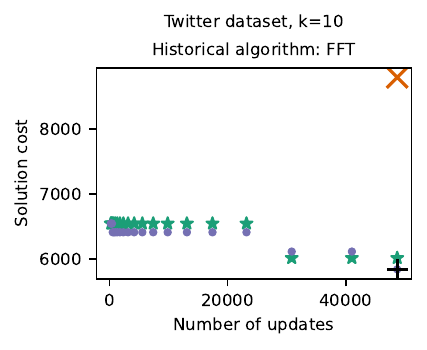}
\end{subfigure}
\caption{Extra plots for the first experimental setup, $k=10$.}
\label{fig:extraPlots1.1}
\end{figure*}

\begin{figure*}
\centering
\begin{subfigure}{\textwidth}
\includegraphics[width=.3\textwidth]{./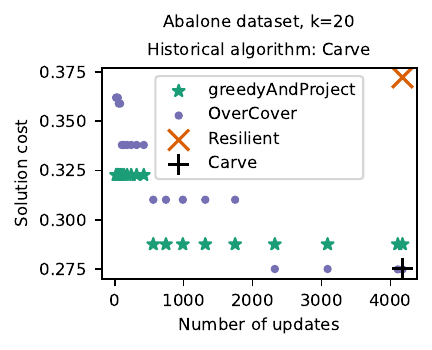}\hfill
\includegraphics[width=.3\textwidth]{./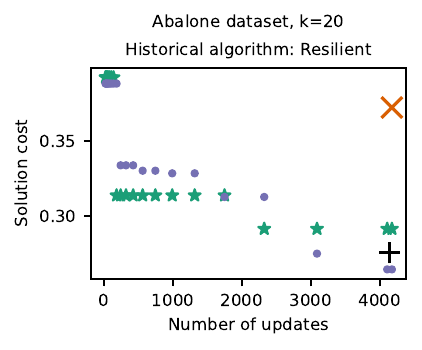}\hfill
\includegraphics[width=.3\textwidth]{./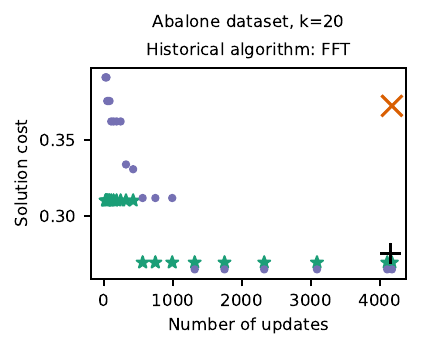}
\end{subfigure}
\hfill
\begin{subfigure}{\textwidth}
\includegraphics[width=.3\textwidth]{./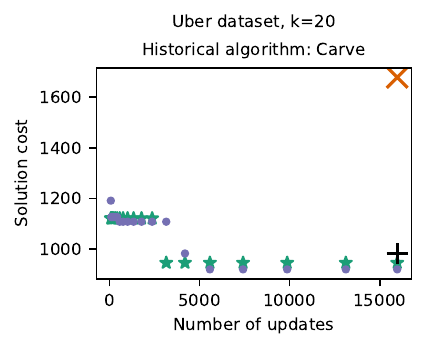}\hfill
\includegraphics[width=.3\textwidth]{./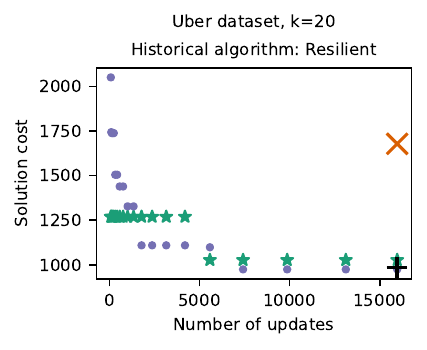}\hfill
\includegraphics[width=.3\textwidth]{./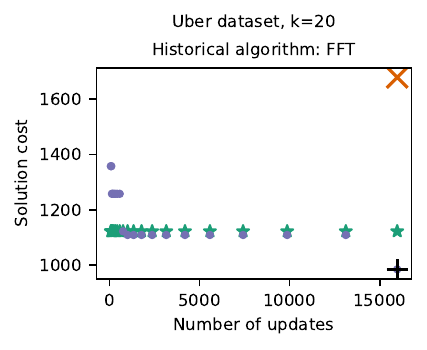}
\end{subfigure}
\hfill
\begin{subfigure}{\textwidth}
\includegraphics[width=.3\textwidth]{./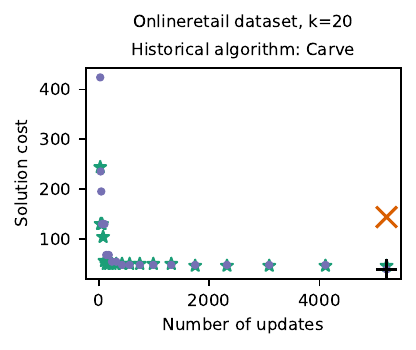}\hfill
\includegraphics[width=.3\textwidth]{./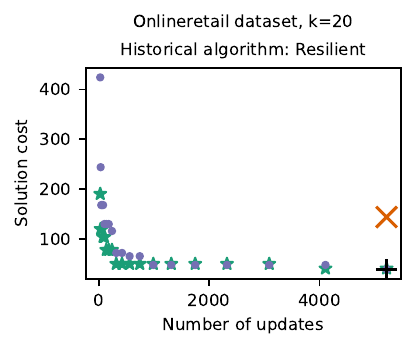}\hfill
\includegraphics[width=.3\textwidth]{./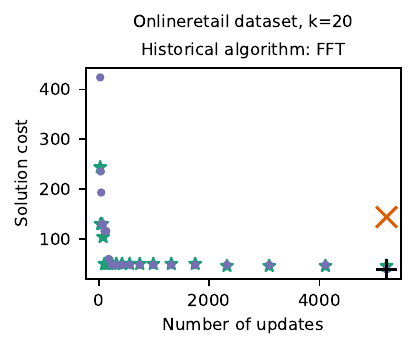}
\end{subfigure}
\hfill
\begin{subfigure}{\textwidth}
\includegraphics[width=.3\textwidth]{./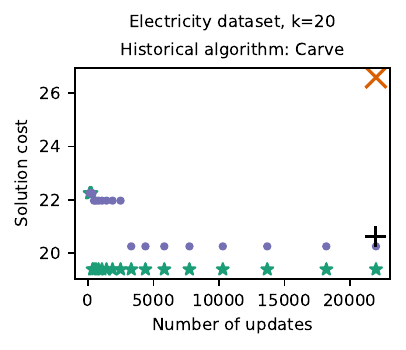}\hfill
\includegraphics[width=.3\textwidth]{./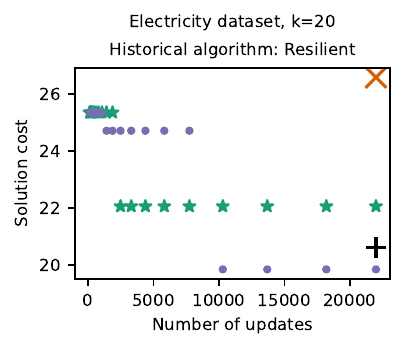}\hfill
\includegraphics[width=.3\textwidth]{./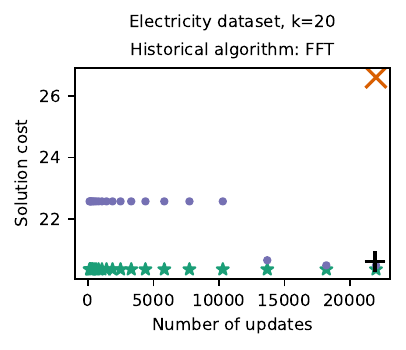}
\end{subfigure}
\hfill
\begin{subfigure}{\textwidth}
\includegraphics[width=.3\textwidth]{./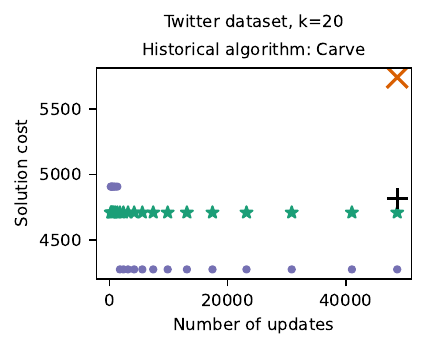}\hfill
\includegraphics[width=.3\textwidth]{./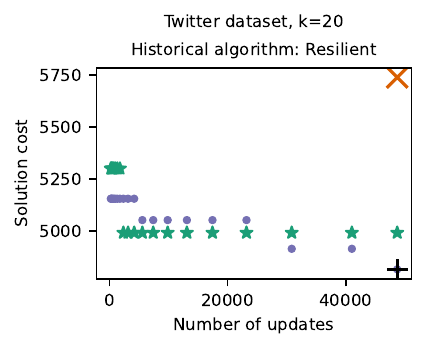}\hfill
\includegraphics[width=.3\textwidth]{./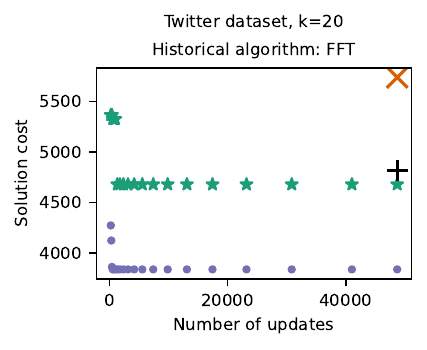}
\end{subfigure}
\caption{Extra plots for the first experimental setup, $k=20$.}
\label{fig:extraPlots1.2}
\end{figure*}

\begin{figure*}
\centering
\begin{subfigure}{\textwidth}
\includegraphics[width=.3\textwidth]{./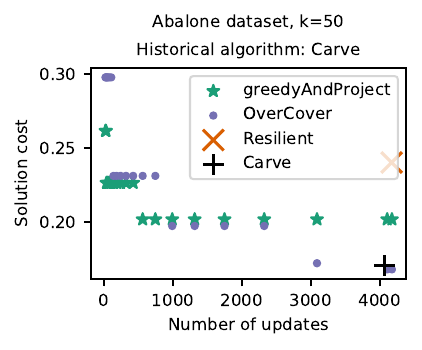}\hfill
\includegraphics[width=.3\textwidth]{./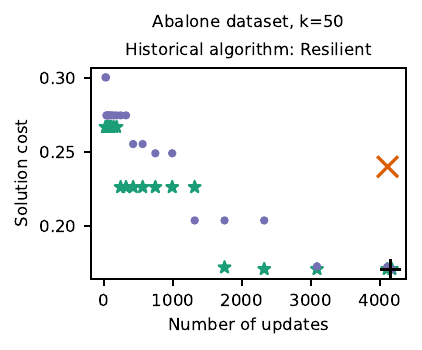}\hfill
\includegraphics[width=.3\textwidth]{./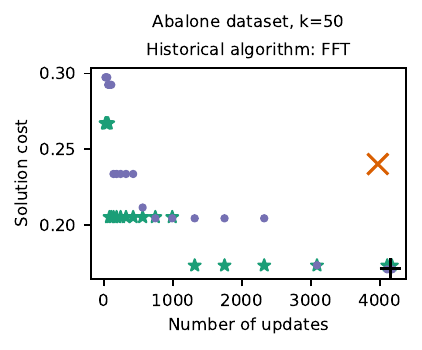}
\end{subfigure}
\hfill
\begin{subfigure}{\textwidth}
\includegraphics[width=.3\textwidth]{./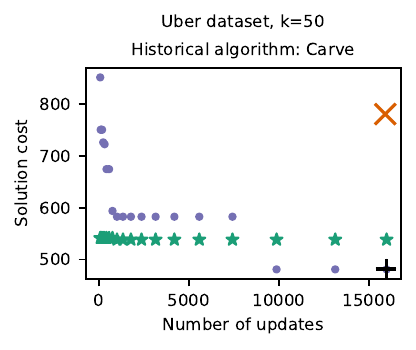}\hfill
\includegraphics[width=.3\textwidth]{./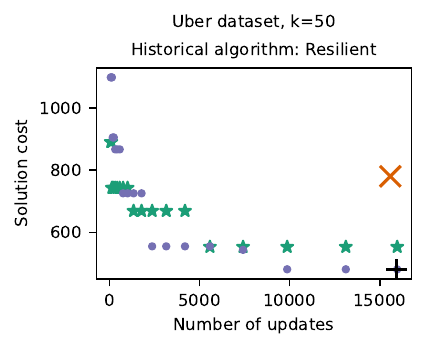}\hfill
\includegraphics[width=.3\textwidth]{./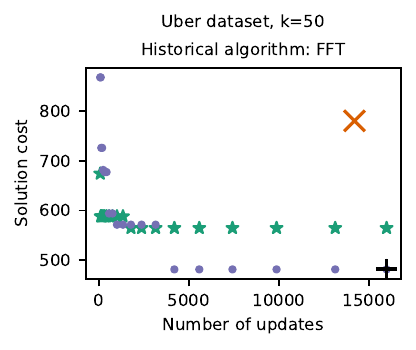}
\end{subfigure}
\hfill
\begin{subfigure}{\textwidth}
\includegraphics[width=.3\textwidth]{./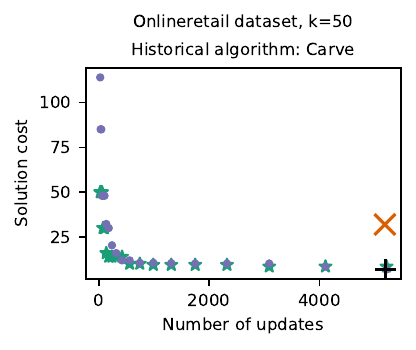}\hfill
\includegraphics[width=.3\textwidth]{./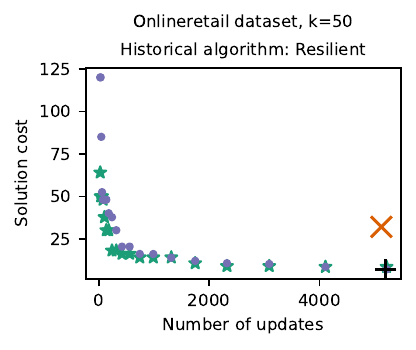}\hfill
\includegraphics[width=.3\textwidth]{./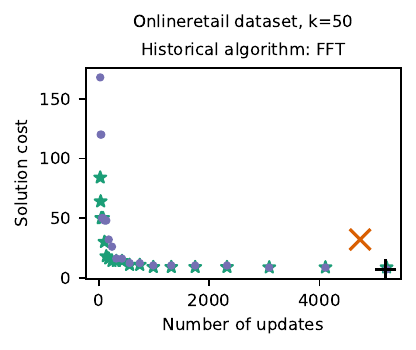}
\end{subfigure}
\hfill
\begin{subfigure}{\textwidth}
\includegraphics[width=.3\textwidth]{./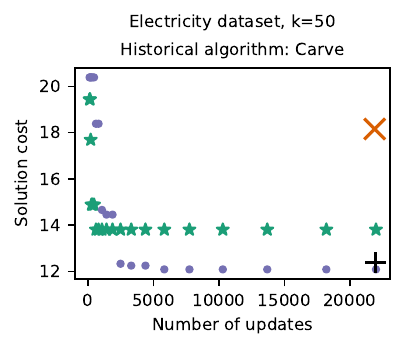}\hfill
\includegraphics[width=.3\textwidth]{./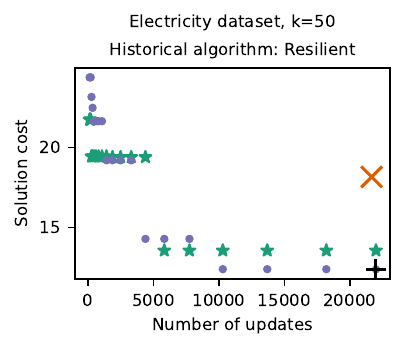}\hfill
\includegraphics[width=.3\textwidth]{./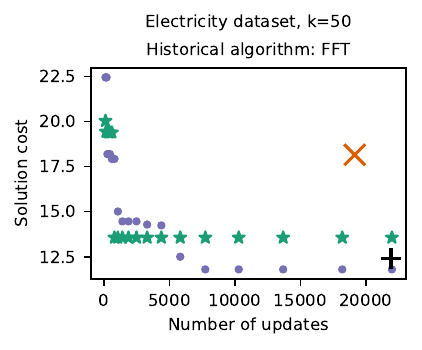}
\end{subfigure}
\hfill
\begin{subfigure}{\textwidth}
\includegraphics[width=.3\textwidth]{./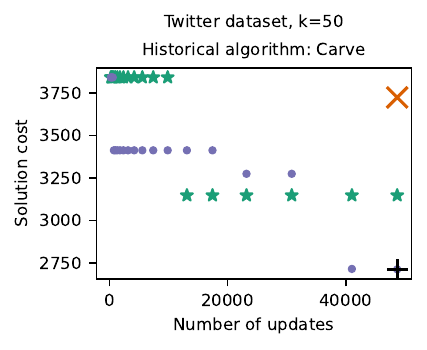}\hfill
\includegraphics[width=.3\textwidth]{./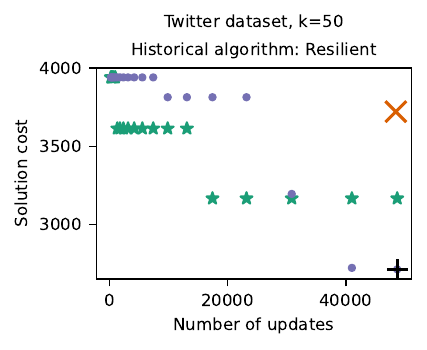}\hfill
\includegraphics[width=.3\textwidth]{./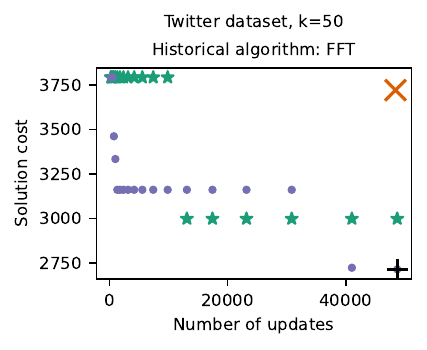}
\end{subfigure}
\caption{Extra plots for the first experimental setup, $k=50$.}
\label{fig:extraPlots1.3}
\end{figure*}

\begin{figure*}
\centering
\begin{subfigure}{\textwidth}
\includegraphics[width=.58\textwidth]{./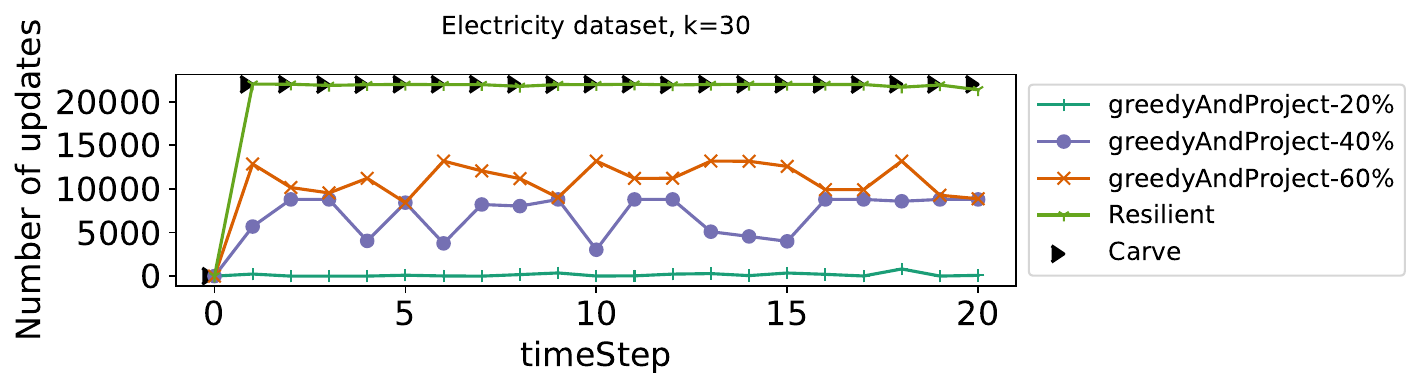}\hfill
\includegraphics[width=.41\textwidth]{./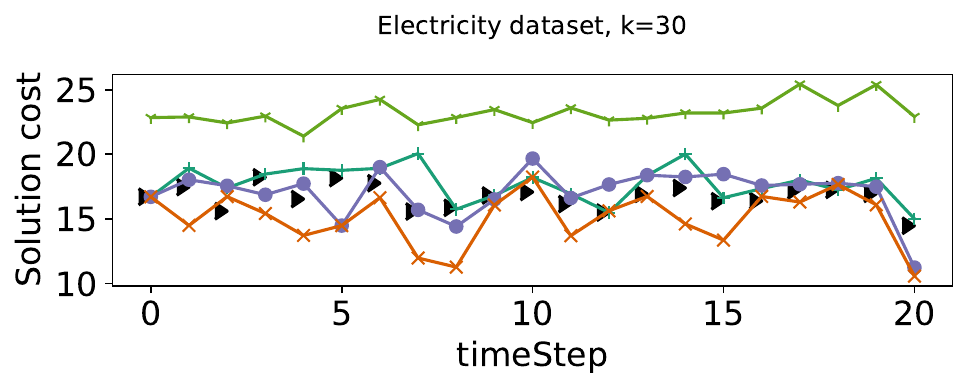}\hfill
\end{subfigure}

\begin{subfigure}{\textwidth}
\includegraphics[width=.58\textwidth]{./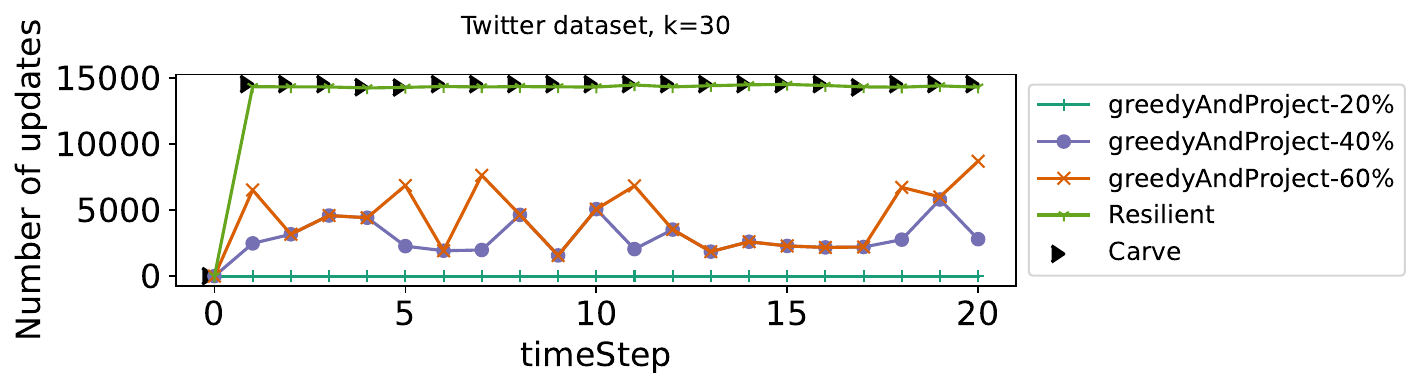}\hfill
\includegraphics[width=.41\textwidth]{./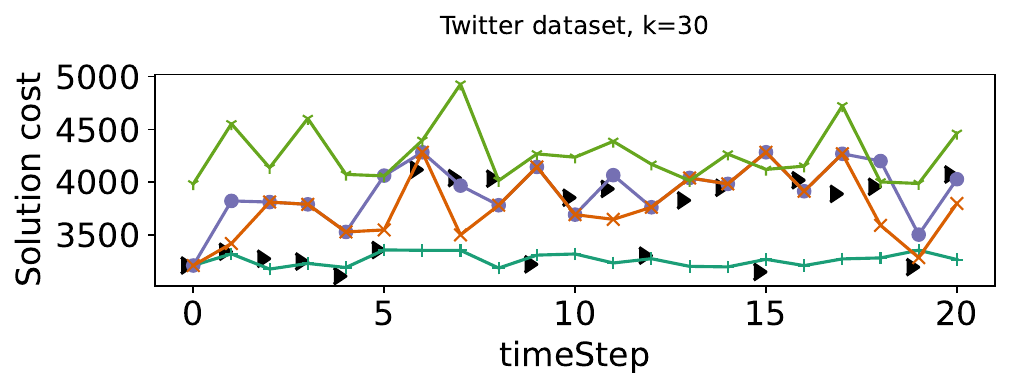}\hfill
\end{subfigure}

\caption{Extra plots for the second experimental setup, for \mainAlgo.}
\label{fig:extraPlots2.1}
\end{figure*}

\begin{figure*}
\centering
\begin{subfigure}{\textwidth}
\includegraphics[width=.58\textwidth]{./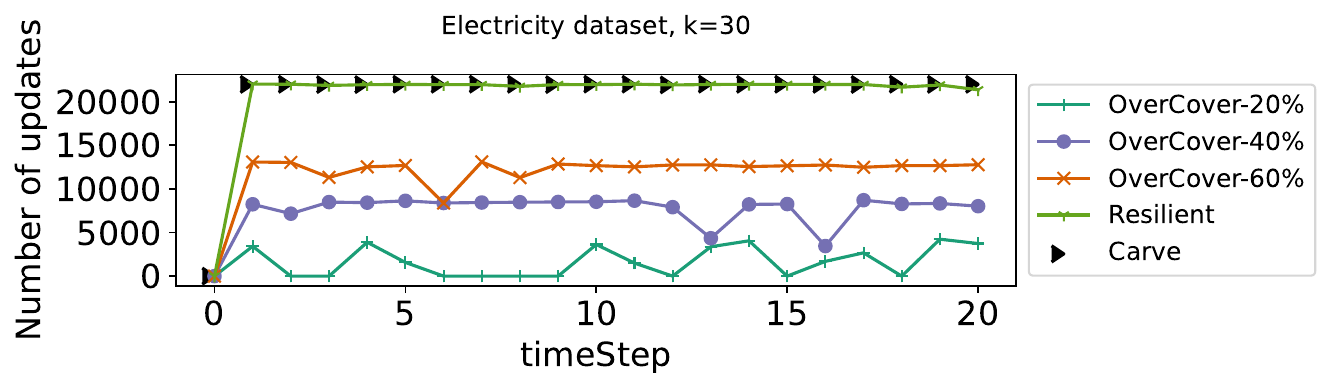}\hfill
\includegraphics[width=.41\textwidth]{./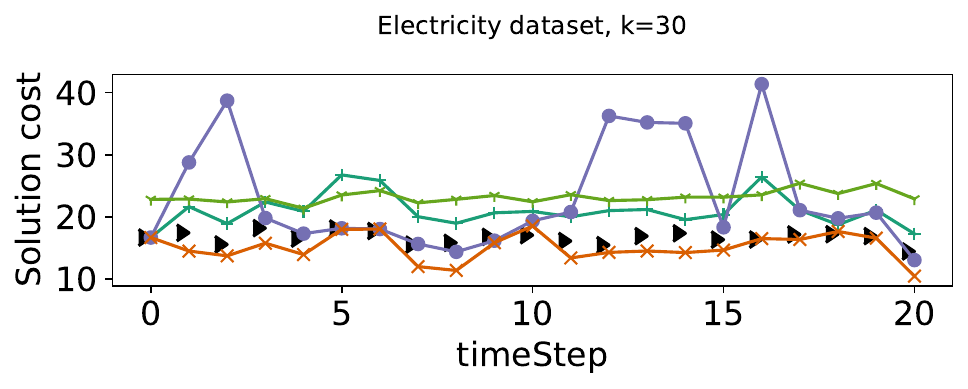}\hfill
\end{subfigure}
\hfill
\begin{subfigure}{\textwidth}
\includegraphics[width=.58\textwidth]{./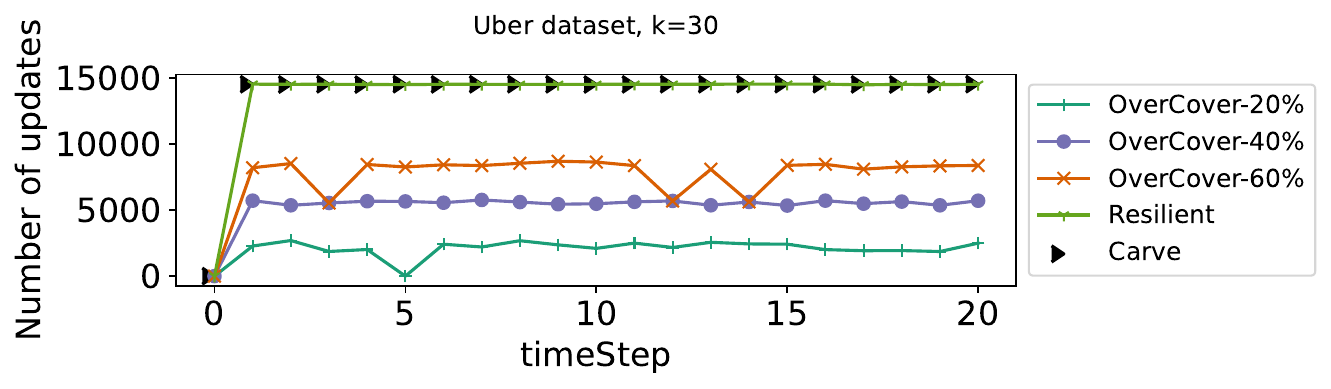}\hfill
\includegraphics[width=.41\textwidth]{./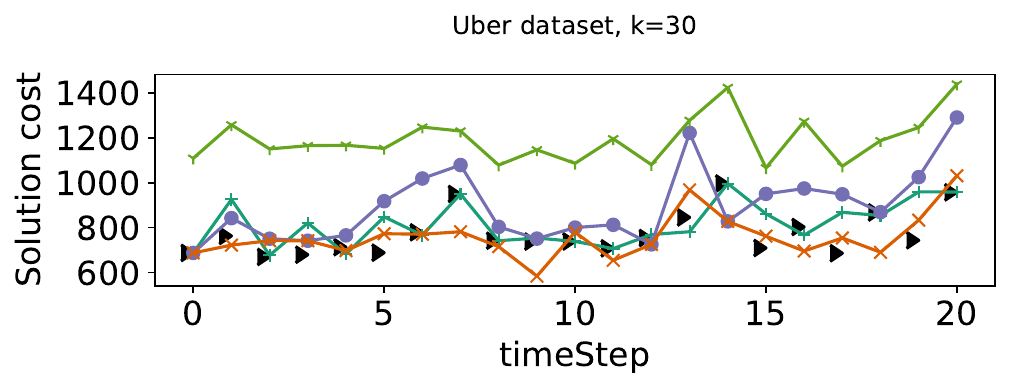}\hfill
\end{subfigure}
\begin{subfigure}{\textwidth}
\includegraphics[width=.58\textwidth]{./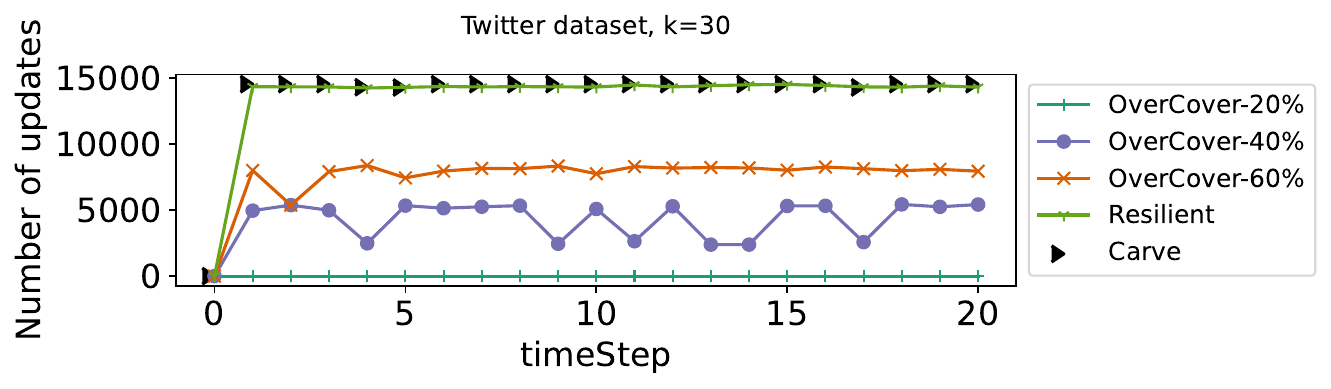}\hfill
\includegraphics[width=.41\textwidth]{./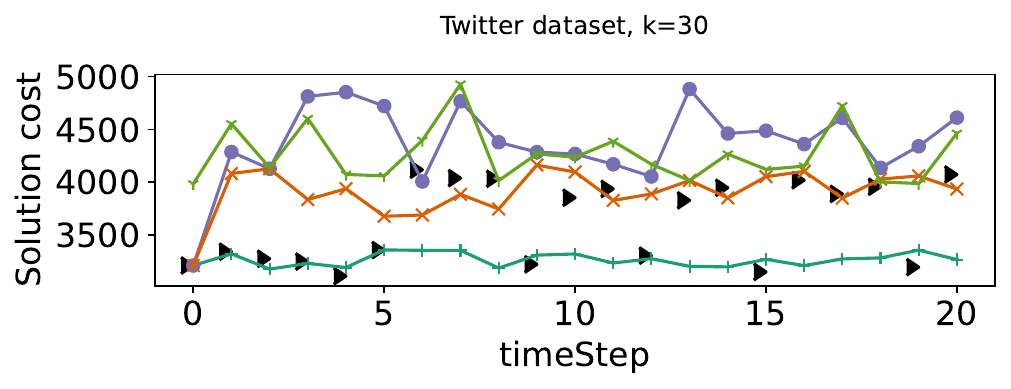}\hfill
\end{subfigure}
\begin{subfigure}{\textwidth}
\includegraphics[width=.58\textwidth]{./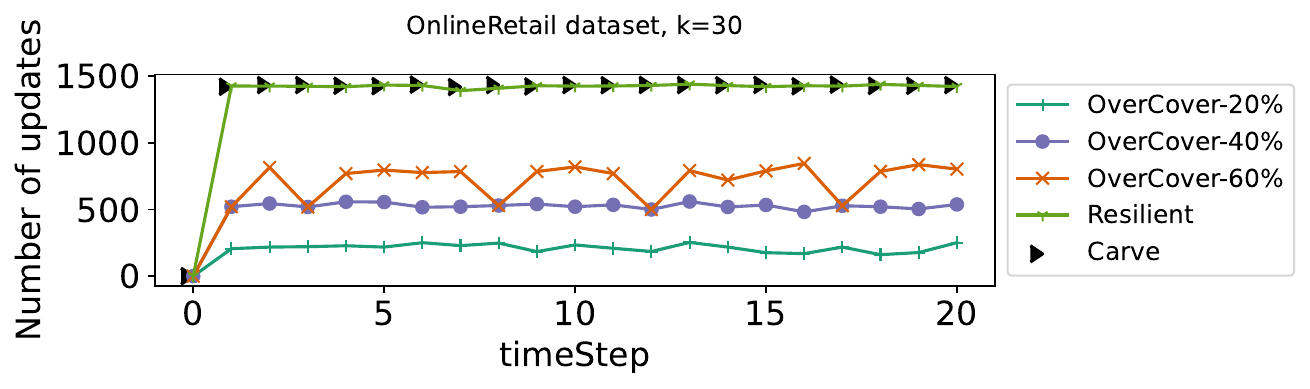}\hfill
\includegraphics[width=.41\textwidth]{./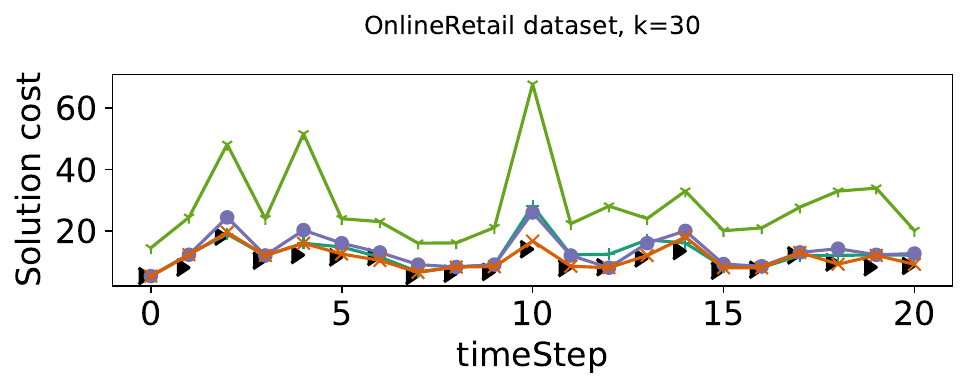}\hfill
\end{subfigure}
\caption{Extra plots for the second experimental setup, for \twotwoapprox.}
\label{fig:extraPlots2.2}
\end{figure*}

\end{document}